\title{The Calissons Puzzle} 
\author{Favreau Jean-Marie}{Université Clermont Auvergne, LIMOS, France \and \url{http://www.myhomepage.edu} }{j-marie.favreau@uca.fr}{https://orcid.org/0000-0002-2460-6336}{French ANR PRC grant (ANR-19-CE19-0005)}
\author{Gerard Yan
}{Université Clermont Auvergne, LIMOS, France}{yan.gerard@uca.fr}{[https://orcid.org/0000-0002-2664-0650]}{French ANR PRC grant ADDS (ANR-19-CE48-0005)}
\author{Lafourcade Pascal}{Université Clermont Auvergne, LIMOS, France}{pascal.lafourcade@uca.fr}{[https://orcid.org/0000-0002-4459-511X]}{ANR PRC grant MobiS5 (ANR-18-CE39-0019), DECRYPT (ANR-18-CE39-0007), SEVERITAS (ANR-20-CE39-0005) and by the French government IDEX-ISITE initiative 16-IDEX-0001 (CAP 20-25)}
\author{Robert Léo}{Université Clermont Auvergne, LIMOS, France}{leo.robert@uca.fr}{[https://orcid.org/0000-0002-9638-3143]}{ANR PRC grant MobiS5 (ANR-18-CE39-0019)}
\authorrunning{J.M. Favreau, Y. Gerard, P. Lafourcade, L. Robert. } 
\keywords{Tiling, Lozenge, Matching, Height function, Directed Acyclic Graph, DAG Cut} 
\newcommand{\etal}
\newcommand{\R}{\ensuremath{\mathbb{R}}}
\newcommand{\Z}{\ensuremath{\mathbb{Z}}}
\def\back{\mathrm{Back}}
\def\front{\mathrm{Front}}
\def\H{\mathcal{H}}
\title{The Calissons Puzzle}
\newlength{\strutdepth}%
\newcommand{\mycomment}[3]{%
    \noindent{\bfseries
    \color{#2}{#1}\color{black}}%
    \strut\vadjust{\kern-\strutdepth%
        \vtop to \strutdepth{%
            \baselineskip\strutdepth%
            \vss\llap{{\large\color{#2}#3\quad\color{black}}}\null%
        }%
    }%
}
\pgfplotsset{compat=1.18}
\begin{document}
\maketitle
\begin{abstract}
In 2022, Olivier Longuet, a French mathematics teacher, created a game called the \textit{calissons puzzle}. Given a triangular grid in a hexagon and some given edges of the grid, the problem is to find a calisson tiling such that no input edge is overlapped and calissons adjacent to an input edge have different orientations. 
We extend the puzzle to regions  $R$ that are not necessarily hexagonal. 
The first interesting property of this puzzle is that, unlike the usual calisson or domino problems, it is solved neither by a maximal matching algorithm, nor by Thurston's algorithm. This raises the question of its complexity.

We prove that if the region $R$ is finite and simply connected, then the puzzle can be solved by an algorithm that we call the \textit{advancing surface algorithm} and whose complexity is $O(|\partial R|^3)$ where $\partial R|$ is the size of the boundary of the region $R$. In the case where the region is the entire infinite triangular grid, we prove that the existence of a solution can be solved with an algorithm of complexity $O(|X|^3)$ where $X$ is the set of input edges. To prove these theorems, we revisit William Thurston's results on the calisson tilability of a region $R$. The solutions involve equivalence between calisson tilings, stepped surfaces and certain DAG cuts that avoid passing through a set of edges that we call \textit{unbreakable}. It allows us to generalize Thurston's theorem characterizing tilable regions by rewriting it in terms of descending paths or absorbing cycles. Thurston's algorithm appears as a distance calculation algorithm following Dijkstra's paradigm. The introduction of a set $X$ of interior edges introduces negative weights that force a Bellman-Ford strategy to be preferred. These results extend Thurston's legacy by using computer science structures and algorithms.
\end{abstract}

\section{Introduction}
Tilings have been a subject of interest for mathematicians for centuries, and more recently for famous mathematicians such as John Conway or William Thurston. Some of the most common tilings are tilings by \textit{calissons} i.e lozenges or rhombus. The name \textit{calisson} comes from the name of a French sweet made in Aix-en-Provence, a small town in the south of France.
Calisson tilings have the nice property to be interpreted in 3D as the perspective image of a stepped surface. 

\begin{figure}[ht]
  \begin{center}
		\includegraphics[width=0.7\textwidth]{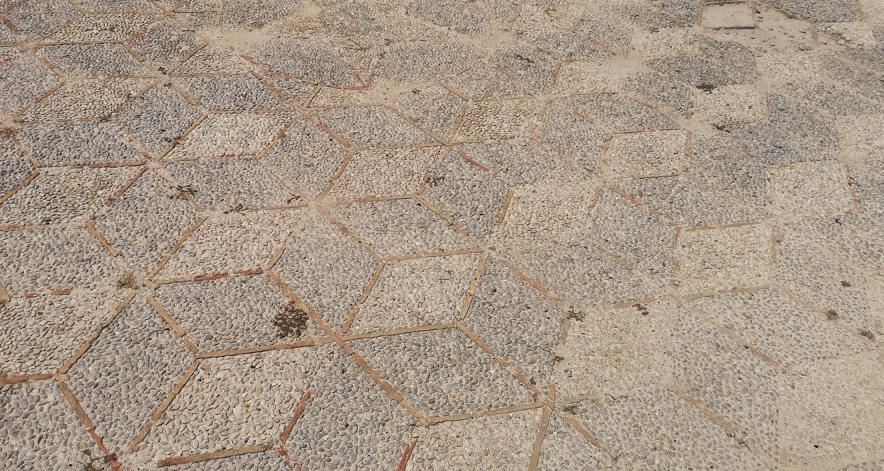}
	\end{center}
	\caption{\label{stabilo} \textbf{A calisson tiling} on the forecourt of the former building of \textit{Florio delle Tonnare di Favignana e Formica} on the island of Favignana where the 2022 edition of the excellent conference \textit{FUN with algorithms} took place. Calissons pavements give a 3d impression.}
\end{figure}

In this framework, Olivier Longuet, a french teacher of mathematics, created in 2022 an interesting logic puzzle called the \textit{Calissons Puzzle} (in french, the original name is \textit{le jeu des calissons}). This puzzle has the merit of developing children's sense of the third dimension and of being recreational.
A full description -in french- with many instances and an app to play online are available on a \href{https://mathix.org/calisson/blog/}{blog} led by Olivier Longuet. 
The rules are very simple. The problem is presented in a triangular grid bounded by a regular hexagon. 
A calisson is a pair of adjacent triangles. There are three types of calissons, each associated with a yellow, red or blue color, depending on their direction. 

\begin{figure}[ht]
  \begin{center}
		\includegraphics[width=0.95\textwidth]{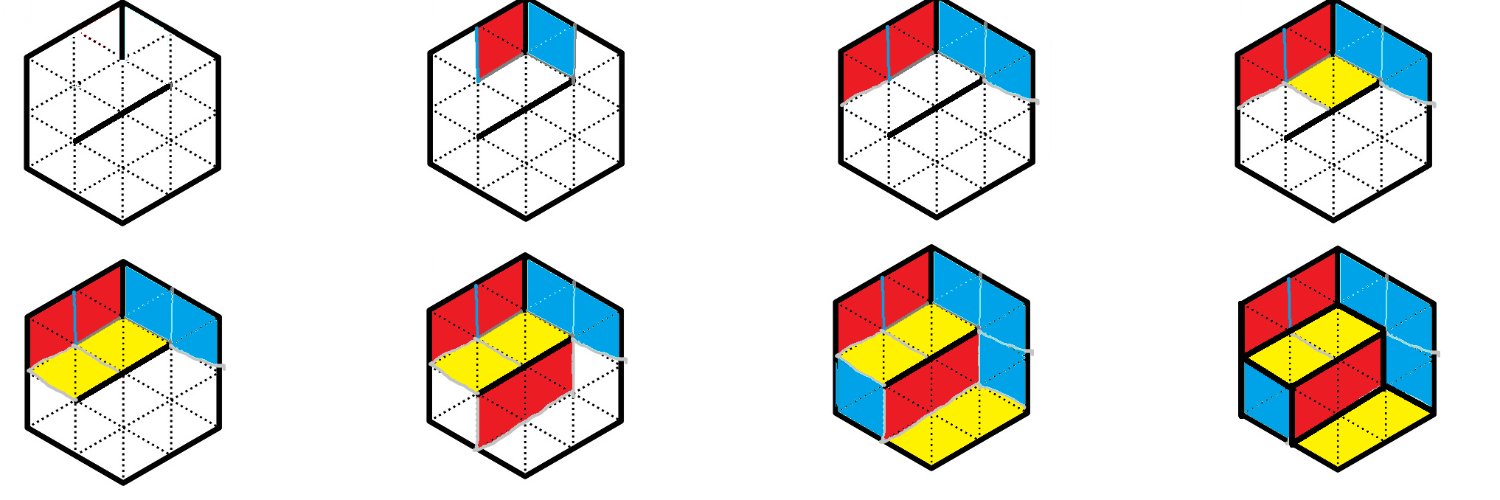}
	\end{center}
	\caption{\label{regles} \textbf{The rules of the puzzle} (image from Olivier Longuet's blog): we give ourselves a set of edges, for example in the top left-hand corner. The aim is to tile the hexagon with calissons in such a way that the edges given as input are adjacent to two calissons of different colors.}
\end{figure}

An instance of a calissons puzzle is made up of edges of the triangular grid. The problem is to tile the grid with calissons in such a way that the edges given as input are not overlapped by  a calisson and are adjacent to two calissons of different colors (Fig.~\ref{regles}). 

For a first try, two instances of the puzzle are drawn in figure~\ref{instance2}.

\begin{figure}[ht]
  \begin{center}
		\includegraphics[width=0.43\textwidth]{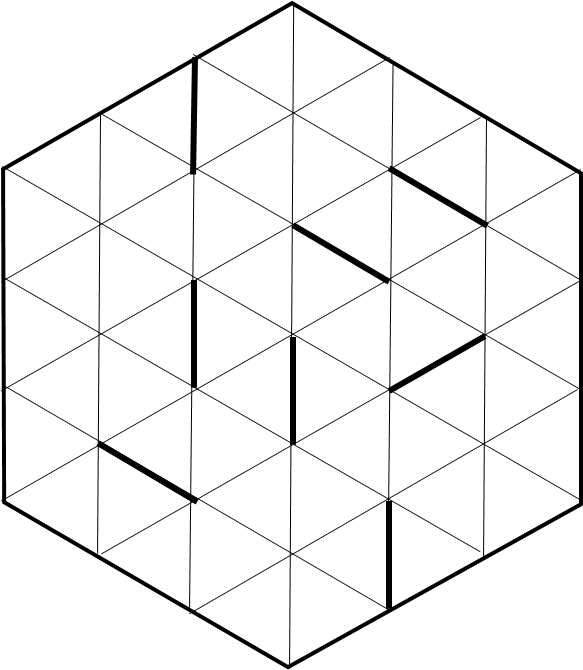}
  \hspace{1cm}
  \includegraphics[width=0.43\textwidth]{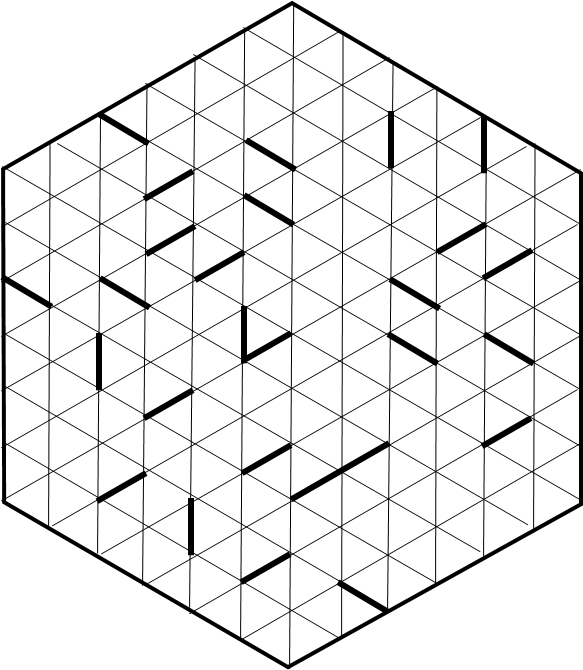}
	\end{center}
	\caption{\label{instance2} \textbf{Two instances of the puzzle.} The puzzle instance with $n=6$ is solved figure \ref{surfacequiavance}.}
\end{figure}

Our first goal is to determine the complexity of the puzzle.
We solve this question and a bit more in the triangular grid. 

\subsection{Notations}

The triangular grid can be defined as the projection of the cubic grid. 

\textbf{The grids $\square$ and $\square _n$.}  The primary cube  $C\subset \R ^3$ is $[0,1]^3$. The cubes of the cubic grid are simply denoted $(x,y,z)+C$ with $(x,y,z)\in \Z ^3$. These are the translates of $C$ by $(x,y,z)$. 
The sets of cubes, faces, edges and vertices of the cubic grid are respectively denoted $\square ^3$, $\square ^2$, $\square ^1$ 
and $\square ^0$ according to their dimension. Their union is a cubic complex denoted $\square = \square ^3 \cup \square ^2 \cup \square ^1 \cup \square ^0$.
For an integer $n$, we focus on the cellular complex $\square _n =\square ^3 _n \cup \square _n ^2 \cup \square _n ^1 \cup \square _n ^0$ containing the cubes, 
faces, edges and vertices of cubes $(x,y,z)+C$
where $(x,y,z)\in \{ 0 \cdots n-1\}^3$ with particular interest in the set of its cubes $\square _n ^3$. 

\textbf{The grids $\triangle$ and $\triangle _n$.} The infinite triangular grid $\triangle$ and its restriction $\triangle _n$ to the regular hexagon $\varphi([0,n]^3)$ are obtained by projecting the cell complexes $\square$ and $\square _n$ along $\varphi$ where $\varphi$ is the projection of the 3D space $\R^3$ onto a plane $H$ of equation $x+y+z=h$ in the direction $(1,1,1)$.

Rather than using two coordinates in the planar grid $\triangle$, the classic choice for working in the triangular grid is to use so-called \textit{homogeneous} coordinates. A point in the $\varphi (x,y,z)$ plane is identified by its three coordinates $(x,y,z)$, but to avoid any ambiguity, we  keep the letter $\varphi$ to differentiate between points in space noted $(x,y,z)$ and points $\varphi (x,y,z)$ in the  plane.  We obviously have $\varphi(x,y,z)=\varphi (x+k,y+k,z+k)$. Adding $k$ changes the depth of the  point in the $(1,1,1)$ direction without changing its projection. This notion of depth was put forward by the mathematician William Thurston under the name of \textit{height}, which we  use from now on, knowing that it is the height in the $(1,1,1)$ direction.

\begin{figure}[ht]
  \begin{center}
		\includegraphics[width=0.55\textwidth]{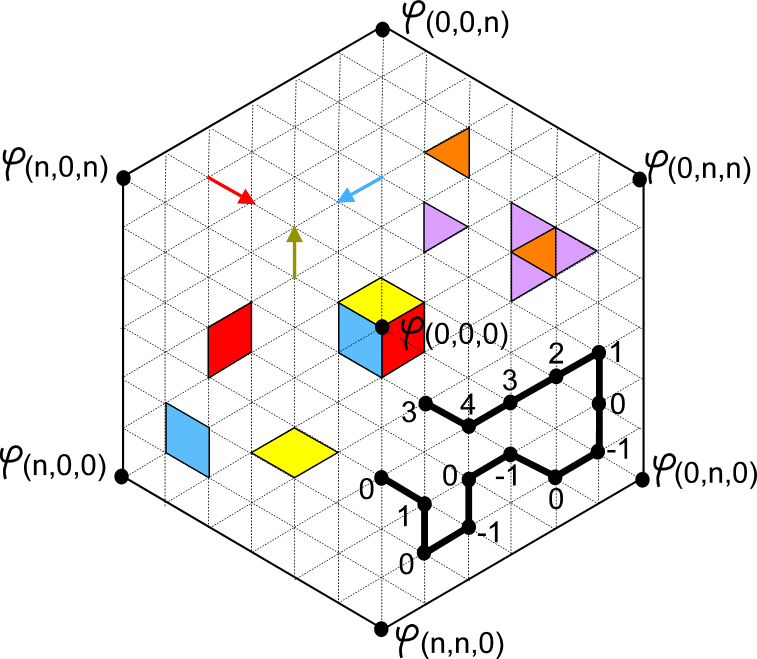}
	\end{center}
	\caption{\label{grid} \textbf{The triangular grid $\triangle _n$} for $n=6$. In the center, the vertex $\varphi (0,0,0)$ and the projection $\varphi(C)$ of the primary cube $C=[0,1]^3$. Above, edges of $\triangle ^1$ in the directions $\varphi(1,0,0)$, $\varphi(0,1,0)$ or $\varphi(0,0,1)$. On the right, a left and a right triangle and the three different ways to associate them in a calisson. On the left, a yellow, a red and a blue calisson. Below, a path $\delta$ and the heights $h(\delta _i)$ associated with points $\delta _i$ from an endpoint of height arbitrarily set at $0$.}
\end{figure}

The sets $\triangle ^0$ and $\triangle _n ^0$ of the vertices of the triangular grids $\triangle$ and $\triangle _n$ are respectively the projections of the vertices
of $\square$ and $\square _n$. The sets of edges $\triangle ^1$ and $\triangle _n ^1$ of the triangular grids $\triangle$ and $\triangle _n$ are the projections of the sets of edges $\square ^1$ and $\square _n ^1$.  From any vertex in $\triangle ^0$, we have six edges.  Their directions are $\varphi(1,0,0)$, $\varphi(0,1,0)$, $\varphi (0,0,1)$ and their opposite. 
The faces of the $\triangle$ and $\triangle _n$ grids, whose sets are $\triangle ^2$ and $\triangle ^2 _n$, are not projections of the faces of the $\square $ or $\square _n$ complexes, but triangles. We have two types of triangles. All have a vertical edge, but some point to the left and others to the right. We call them \textit{left} or \textit{right}. 

A calisson (or \textit{rhombus} or \textit{lozenge}) is the $\varphi$ projection of a face of the $\square$ grid. These are lozenges obtained by joining a left triangle to an adjacent right triangle of $\triangle$. As the faces of $\square$ have three directions, we have three types of calissons: 
blue, red and yellow calissons are respectively the projections of faces of normal direction $(1,0,0)$, $(0,1,0)$ and $(0,0,1)$. 
The set of calissons of the grids $\triangle$ and $\triangle_n$  are denoted $\Diamond$ and $\Diamond _n$. We have $\Diamond =\varphi (\square ^2)$ and $\Diamond _n=\varphi (\square _n ^2)$.

\subsection{Statements and Results}
With previous notations, original Olivier Longuet's calissons puzzle can be stated as follows.\\

\texttt{Calissons$(X,\triangle _n)$}\vspace{-0.26cm}
\begin{itemize}
    \item \texttt{Input:} An integer $n$ and a subset $X\subset \triangle ^1 _n$ of edges of the triangular grid.
    \item \texttt{Ouput:} a tiling of $\triangle_n$ by  $3n$ calissons so that (i) no edge of $X$ is ovelapped by the interior of a calisson and (ii) the two calissons adjacent to any edge of $X$ have different colors.  \\
\end{itemize}

Condition (i), called \textit{non-overlap condition}, is a natural condition in tiling definition. Condition (ii), that we call the \textit{saliency condition}, takes on its full meaning in dimension $3$, where it means that the edges of $X$ are salient edges of the staircase surface associated with the solution.

The initial problem we are interested in is to determine the complexity of the calissons puzzle. Passing through the notion of \textit{stepped surfaces} defined as a cut of a DAG, we show the following theorem.

\begin{theorem}\label{maincalissons}
An instance of the calissons puzzle \texttt{Calissons$(X,\triangle _n)$} can be solved with an algorithm of complexity $O(n^3)$.
\end{theorem}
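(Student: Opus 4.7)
The plan is to reduce the puzzle to a constrained shortest-path computation on the cubic DAG $\square_n$ and then exhibit a Bellman-Ford-style ``advancing surface'' procedure of total cost $O(n^3)$.

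\textbf{Lifting to 3D.} I would first use $\varphi$ to turn a calisson tiling of $\triangle_n$ into a stepped surface inside $[0,n]^3$, equivalently a monotone cut of the cubic DAG $\square_n$: a downward-closed subset $B \subseteq \square_n^3$ that plays the role of the ``back'' of the surface. Non-overlap and saliency for an edge $e\in X$ then translate into the requirement that the 3D lift of $e$ be a convex edge of the surface, which is a purely local constraint on the four cubes incident to that lifted edge: two must lie in $B$ and two must lie outside.

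\textbf{Heights as shortest paths.} Next, I would encode the existence of such a cut via Thurston's height function $h:\triangle_n^0\to\Z$. Local consistency amounts to difference constraints along each edge of $\triangle_n^1$: the height jump lies in a prescribed set depending on the edge direction and on whether the edge belongs to $X$. Without $X$ all constraints are non-negative, and $h$ is simply a shortest-path distance from a fixed boundary root; this is Thurston's algorithm in the guise of Dijkstra. The edges of $X$ impose equalities (``this edge must be salient''), which in the shortest-path formulation become negative weights that break Dijkstra's monotonicity, so one switches to Bellman-Ford.

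\textbf{An $O(n^3)$ implementation.} Rather than running Bellman-Ford on the planar graph $\triangle_n^0$, I would work in the 3D lift, where $\square_n$ carries a natural topological order along $(1,1,1)$ with $O(n)$ layers of $O(n^2)$ vertices each. The advancing surface sweeps one layer at a time, solving at each step a local shortest-path subproblem of size $O(n^2)$ in which the only negative arcs are those created by edges of $X$ that cross the current layer. Summing $O(n)$ sweeps of $O(n^2)$ work yields the announced $O(n^3)$ bound; infeasibility is detected precisely when a sweep fails to settle, which by the DAG-cut interpretation corresponds to an absorbing cycle witnessing non-tilability.

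The hard part is to prove that $O(n)$ sweeps really do suffice, i.e.\ that any inconsistency induced by $X$ localizes to a short absorbing cycle caught by the layered relaxation, rather than requiring the full $O(V \cdot E)$ cycle search of textbook Bellman-Ford. I expect this to rest on the simple connectedness of $\triangle_n$ together with the DAG structure of $\square_n$, which should bound the length of minimal absorbing cycles by $O(n)$. Once $h$ is known the tiling itself is recovered for free, since each calisson is determined locally by the two heights at its endpoints.
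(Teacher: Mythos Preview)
Your proposal takes a different route from the paper, and the route you chose has a real gap that you yourself flag.

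The paper does \emph{not} reach $O(n^3)$ via Bellman--Ford or any shortest-path computation. Instead, it translates the non-overlap and saliency conditions for each edge $e\in X$ into a set $\Pi(X)$ of \emph{unbreakable} pairs of cubes in $\square_n^3$: pairs that any valid DAG cut must keep on the same side (Lemmas~\ref{constraintz}--\ref{constrainty}). With that reformulation, a solution exists iff the DAG $\mathcal H_n$ augmented by the unbreakable edges still admits a cut separating $\mathrm{Back}_n$ from $\mathrm{Front}_n$, and finding such a cut reduces to computing the connected component of $\mathrm{Front}_n$ in the augmented directed graph. That graph has $O(n^3)$ vertices and bounded degree, so a plain graph search gives the $O(n^3)$ bound with no further analysis. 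This is what the paper calls the ``advancing surface'' algorithm; it is a reachability computation, not a relaxation scheme.

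Your approach, by contrast, encodes the puzzle as a system of difference constraints with negative weights and proposes a layered Bellman--Ford sweep. The paper does develop exactly this shortest-path viewpoint later (Theorem~\ref{monte}, proposition~(4)), but it explicitly observes that Bellman--Ford on the projected graph costs $O(|V|\,|E|)=O(|\partial R|^4)$, i.e.\ $O(n^4)$ here, and hence does \emph{not} match the advancing-surface bound. Your hope that $O(n)$ sweeps suffice because ``minimal absorbing cycles have length $O(n)$'' is precisely the missing lemma; you acknowledge it is unproven, and nothing in the simple connectedness of $\triangle_n$ or the DAG structure of $\square_n$ obviously delivers it, since the negative arcs coming from $X$ can chain together in ways that are not a priori short. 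Separately, your description of the lifted constraint as ``two cubes in $B$, two outside'' is not the right invariant: the correct statement is that certain \emph{pairs} of cubes must lie on the same side of the cut, and it is this ``same-side'' formulation that makes the problem collapse to connectivity rather than to optimization.
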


The algorithm that we use is called the \textit{advancing surface}. It can be implemented directly on a printed puzzle with a pencil and a rubber. 
This first calissons puzzle is however a bit frustrating because there is no specific reason to be uniquely interested in tiling the triangular grid in the hexagon $\triangle _n$. This  class of hexagonal puzzles is however a warm-up before extending the puzzle to more general regions.
The extended version of the puzzle is denoted \texttt{Calissons$(X,R)$} where $R$ is the region to be tiled and $X$ is the set of imposed salient edges.\\

\texttt{Calissons$(X,R)$}\vspace{-0.26cm}
\begin{itemize}
    \item \texttt{Input:} A region $R\subset \triangle^2$ and a subset $X\subset \triangle ^1$ of edges of the triangular grid.
    \item \texttt{Ouput:} A calisson tiling of the region $R$  so that (i) no edge of $X$ is overlapped by the interior of a calisson and (ii) the two calissons adjacent to any edge of $X$ have different colors.  \\
\end{itemize}

We show how to solve this puzzle without using complex algorithms. The tools which allow us to solve it are even two of the most simple algorithms of graphs. They are the computation of a connective component and Bellman-Ford algorithm for computing the distances of the vertices of a graph from a source \cite{Bellman}.
It stems from the extremely simple structure of the calisson tilability problems that William Thurston highlighted in the early 1990s. 
We rewrite our general tilability problem \texttt{Calissons$(X,R)$} in three different ways in Theorem \ref{monte}. 
The exact statement requires notations introduced in the later, but  without going into the details, 
the existence of a solution of the extended calissons puzzle \texttt{Calissons$(X,R)$}  is equivalent to the existence of a cut in a graph itself equivalent to  the non-existence of a descending path, and at last to the non existence of an absorbing cycle in a weighted projected graph. 
The DAG cut formulation can be resolved by computing a connective component while the absorbing cycle can be detected with Bellman-Ford algorithm.
By solving the general tilability problem \texttt{Calissons$(X,R)$}, we revisit Thurston legacy  under the light of computer science with very classical structures of DAGs,  cuts, absorbing cycles and classical algorithms. 

We decompose the problem into two classes of instances depending on whether  the region $R$ is finite or not.
In the case where the region $R$ is simply connected and finite, we denote its boundary $\partial R$ and we generalize the previous advancing surface algorithm solving \texttt{Calissons$(X,\triangle _n)$} to \texttt{Calissons$(X,R)$}. It leads to the next result.

\begin{theorem}\label{wall}
Any instance of the extended calissons puzzle \texttt{Calissons$(X,R)$} for a finite, simply connected region $R$ can be solved with an algorithm of complexity $O(|\partial R |^3)$.
\end{theorem}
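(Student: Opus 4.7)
The strategy is to generalize the advancing surface algorithm that established Theorem \ref{maincalissons}, using the stepped-surface reformulation promised by Theorem \ref{monte}. First I would lift \texttt{Calissons$(X,R)$} to three dimensions: a solution corresponds to a stepped surface whose projection covers $R$, whose height function $h$ agrees on $\partial R$ with the values obtained by integrating the elementary height increments around the boundary, and which keeps every edge of $X$ salient (unbreakable). Since $R$ is simply connected, the boundary heights are well defined up to a global additive constant, and the saliency of each edge of $X$ is a local linear constraint on $h$. Before doing any geometry, I would run the feasibility test supplied by Theorem \ref{monte}, namely detecting an absorbing cycle with Bellman-Ford; if it fails, the instance has no solution, otherwise we proceed to construct one explicitly.

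The construction then advances a surface inside $R$ until it matches a valid tiling. I would initialize the surface at the pointwise minimum admissible height $h_{\min}$, computed by a single Bellman-Ford pass rooted at $\partial R$ with weights encoding the saliency constraints coming from $X$; the analogous dual computation yields the pointwise maximum admissible height $h_{\max}$. At each step I locate a vertex where the current $h$ can be legally incremented by one, an \emph{addable cube} on the advancing front, and apply the increment. The algorithm halts when $h = h_{\max}$, which by construction corresponds to a valid stepped surface and hence to a tiling. To keep per-iteration work near-constant, the front is maintained as a queue of addable vertices, updated locally in the neighbourhood of the pushed cube after each increment.

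For the complexity, the number of iterations equals the enclosed volume $\sum_{v\in R}(h_{\max}(v)-h_{\min}(v))$. By the discrete isoperimetric inequality on the triangular grid, the area of a finite simply connected region $R$ is $O(|\partial R|^2)$, while the gap $h_{\max}-h_{\min}$ at any vertex is bounded by the oscillation of $h$ along $\partial R$, which is $O(|\partial R|)$. Hence the volume, and therefore the number of local increments, is $O(|\partial R|^3)$, and with amortized constant work per increment the total cost is $O(|\partial R|^3)$. In the hexagonal special case $R=\triangle_n$ with $|\partial R|=6n$, this recovers Theorem \ref{maincalissons}.

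The delicate step I expect to be the main obstacle is the correctness of the local increment, that is, showing that whenever $h \neq h_{\max}$ and the instance has passed the feasibility test, an addable vertex always exists. Unbreakable edges of $X$ may locally forbid increments that would be legal in the unconstrained setting, so one cannot argue purely geometrically as in the hexagonal case. The argument must translate any alleged stall into a global obstruction in the associated DAG via Theorem \ref{monte}: a stalled configuration should be shown to yield either a descending path or an absorbing cycle through an edge of $X$, contradicting the initial feasibility verdict. Formalising this dichotomy uniformly for every simply connected $R$, and extracting the obstructing cycle from purely local blocking data on the front, is where the substantive work lies.
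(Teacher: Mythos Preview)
Your proposal has a concrete complexity gap that prevents it from reaching $O(|\partial R|^3)$. You invoke Bellman--Ford twice in the preprocessing: once for the feasibility test of Theorem~\ref{monte}, and once to compute $h_{\min}$ (and $h_{\max}$) ``with weights encoding the saliency constraints coming from $X$''. But on the projected graph $\varphi(\overline\square^3_R,\wedge_R\cup\vee_R\cup\vee_X\cup\lessgtr_X)$ there are $O(|\partial R|^2)$ vertices and $O(|\partial R|^2)$ edges, so Bellman--Ford costs $O(|V|\,|E|)=O(|\partial R|^4)$. The paper states this explicitly in Section~\ref{wwwalll} and concludes that the generalized Thurston/Bellman--Ford route \emph{does not} achieve the cubic bound. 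Your preprocessing alone therefore already exceeds the target.

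The paper's argument avoids this by a different split of the work. It runs \emph{unconstrained} Thurston twice (no $X$, hence no negative weights, hence $O(|R|)=O(|\partial R|^2)$) to obtain the extremal tilings $P_{\min},P_{\max}$; these merely delimit the finite slab $\back_R\cup\square^3_R\cup\front_R$, which has $O(|\partial R|^3)$ cubes. All of the $X$-constraints are then handled in one shot by a connective-component exploration of this finite 3D DAG enriched with the unbreakable edges $\Pi(X)$: either the component of $\front_R$ meets $\back_R$ (no solution) or its boundary is the maximal solution. This is linear in the slab size, hence $O(|\partial R|^3)$, and it needs no feasibility pre-test and no ``addable-vertex'' lemma---the delicate step you flag simply does not arise. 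Note also that in your scheme, once $h_{\min}$ is computed \emph{with} the $X$ weights and no absorbing cycle exists, $h_{\min}$ is already a valid tiling, so the subsequent advancing loop is redundant; the loop only makes sense when the initial surface ignores $X$, which is exactly what the paper does.
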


In the case of an unbounded region with no holes, the question is not to provide an explicit tiling of $R$ but to determine whether the instance admits a solution. The infinity of the region $R$ introduces a lock which is the computation of distances in an infinite graph. When this lock is open, as it is for the infinite triangular  grid $\triangle$, we use the absorbing cycle formulation to show the following result:

\begin{theorem}\label{wallinfini}
Any instance of the extended calissons puzzle \texttt{calissons$(X,\triangle)$} on the entire triangular grid $\triangle$ can be solved with an algorithm of complexity $O(|X|^3)$.
\end{theorem}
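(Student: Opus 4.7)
By the final equivalence in Theorem \ref{monte}, an instance of \texttt{calissons}$(X,\triangle)$ admits a solution if and only if the associated weighted projected graph $G$ contains no absorbing cycle. The new difficulty compared with Theorem \ref{wall} is that $G$ is infinite, so Bellman-Ford cannot be invoked verbatim. My plan is to collapse the unperturbed portion of $G$ into a finite auxiliary graph of size polynomial in $|X|$, and then run Bellman-Ford on it.

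The structural ingredient is the path-independence of Thurston's height function on $\triangle$: in the graph $G$ associated to the empty constraint set, the weight of any directed walk depends only on its endpoints, and in particular every closed walk has weight zero. Consequently, for a general $X$ the weight of any walk in $G$ decomposes into contributions of the finitely many arcs that are affected by $X$, plus contributions of unperturbed sub-walks, each of which depends only on its two endpoints. An absorbing cycle therefore consists of a strictly negative contribution coming from the edges of $X$ it uses, together with zero-sum detours that can be shortcut freely through the unperturbed bulk without altering the total weight of the cycle.

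This motivates the following finite reduction. Let $V_0\subset\triangle^0$ be the set of endpoints of edges of $X$, so $|V_0|=O(|X|)$. Define a finite weighted digraph $H$ on vertex set $V_0$ with two kinds of arcs: the arcs inherited from $X$, carrying their perturbed weights, and, for every ordered pair $(u,v)\in V_0\times V_0$, an arc $u\to v$ whose weight is the unique unperturbed height difference from $u$ to $v$, computable in $O(1)$ directly from the coordinates of $u$ and $v$. The graph $H$ has $|V(H)|=O(|X|)$ vertices and $|E(H)|=O(|X|^2)$ arcs, and can be assembled in $O(|X|^2)$ time. The shortcut argument above shows that $G$ admits an absorbing cycle if and only if $H$ does.

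Running Bellman-Ford from an artificial source on $H$ detects a negative cycle, if any, in time $O(|V(H)|\cdot|E(H)|)=O(|X|^3)$, which matches the claimed bound. The main obstacle I anticipate is the clean formulation of the reduction from $G$ to $H$: one must verify that $V_0$ is a sufficient vertex set (no further ``critical'' endpoints outside the support of $X$ need to be added) and that gluing perturbed arcs with shortcut arcs preserves the absorbing property in both directions, lifting negative cycles of $H$ back to $G$ and contracting absorbing cycles of $G$ down to $H$. Once this reduction is in place, the Bellman-Ford step and the overall cubic complexity are routine.
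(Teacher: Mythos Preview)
Your overall plan coincides with the paper's proof: collapse the infinite projected graph onto the $O(|X|)$ vertices incident to the constraint arcs, add shortcut arcs between them, and run Bellman--Ford for an $O(|X|^3)$ bound. The gap is in your stated ``structural ingredient.''

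You assert that in the unperturbed projected graph ``the weight of any directed walk depends only on its endpoints, and in particular every closed walk has weight zero.'' This is false. For $R=\triangle$ and $X=\emptyset$ the only arcs are those of $\varphi(\wedge_\triangle)$, all of weight $+1$; the directed triangle
\[
\varphi(0,0,0)\to\varphi(1,0,0)\to\varphi(1,1,0)\to\varphi(1,1,1)=\varphi(0,0,0)
\]
is a closed walk of weight $3$. Thurston's height is path-independent only along the edges of a fixed tiling, not on the whole triangular grid: going once around a single triangle changes it by $\pm 3$. Consequently there is no ``unique unperturbed height difference from $u$ to $v$,'' and your proposed shortcut weight is not well-defined.

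The correct choice---and the one the paper makes---is to weight the shortcut arc $u\to v$ by the \emph{shortest-path distance} from $u$ to $v$ in $\varphi(\overline{\square}^3_\triangle,\wedge_\triangle)$, which is always non-negative and given by the closed formula $(x'-x)+(y'-y)+(z'-z)-3\min\{x'-x,\,y'-y,\,z'-z\}$, hence computable in $O(1)$. The reduction (Lemma~\ref{reducgraphe}) then works because an absorbing cycle in $G$ must visit a critical vertex, and between two consecutive critical vertices it uses only weight-$+1$ arcs, so replacing that stretch by a shortest such path can only make the total weight more negative; the converse direction expands a shortcut back into an actual shortest path. One further detail you flagged yourself: the saliency arcs $\lessgtr_X$ have as endpoints the two vertices \emph{opposite} an edge $e\in X$ (the third vertices of the two triangles bordering $e$), not the endpoints of $e$; your $V_0$ must include these as well, though the count remains $O(|X|)$ and the complexity is unaffected.
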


Following this introduction, the paper is organized into five sections. The section~\ref{Legacy} presents William Thurston legacy about the question of calisson tilability. 
The section~\ref{fail} shows that standard methods fail for solving the calissons puzzles. Then, contrary to usual practice, we do not present the general theory of \texttt{Calissons$(X,R)$} and then apply it to the particular case of calissons puzzles \texttt{Calissons$(X,\triangle _n)$}. We first present in Section~\ref{res} how to solve an instance of \texttt{Calissons$(X,\triangle _n)$}. The section~\ref{wwwalll} ends the paper with the extended version \texttt{Calissons$(X,R)$} and its resolution through equivalent propositions.

\section{Thurston's Legacy}\label{Legacy}

One of the questions explored by John Conway and William Thurston is whether a region is tilable by a given set of tiles, a question that applies to the triangular grid  $\triangle$ with calissons. John Conway gave an algebraic expression to the tilability problem by reducing it to the word problem. This problem consists in determining whether the word at the edge of the region represents the neutral element in the group generated by elementary displacements equipped with relations defined by the boundary of the tiles \cite{conway}.

Thurston revisited Conway's work  in the papers \cite{Thurston,Thurston2} by introducing a notion of height.

\subsection{Height in the triangular grid $\triangle$.}

Height is naturally defined in the three-dimensional space $\R ^3$. We define it according to the  direction $(1,1,1)$. The height of a point $(x,y,z)\in \square ^0$ is $h(x,y,z)=x+y+z$. We can not define the height of a point on the  grid $\triangle$ in an absolute manner, but we can define it in relative terms for points on a path.
Consider a path $\delta$ made up of consecutive points $\delta _i\in \triangle ^0$ linked by edges $\delta_i , \delta _{i+1} \in \triangle ^1$. This path can be lifted  in $\square$ to a path $\gamma$, a consecutive sequence of points $\gamma _i\in \square ^0$ such that $\varphi(\gamma _i)=\delta_i$ and $\gamma _i,\gamma _ {i+1}\in \square ^1$. This lift is not unique, as it can be made at different heights, but it is unique up to any vector translation $(k,k,k)$. The height differences between the points $\gamma _i$ are therefore independent of the chosen lift. If we set the height of $\gamma _0$ to $h(\gamma _0)=0$, we have a sequence of heights $h(\delta _i)$ defined by $h(\delta _i)=h(\gamma _i)$. The heights of the vertices on the $\delta$ path can be computed directly in the triangular grid. A step in the directions $-\varphi(1,0,0)$, $-\varphi(0,1,0)$, or $-\varphi(0,0,1)$  increases the height by $1$, while a step in the directions  $+\varphi(1,0,0)$, $+\varphi(0,1,0)$, or $+\varphi(0,0,1)$ decreases the height by $1$ (Fig.\ref{grid}).

\subsection{Tilability Characterization}

William Thurston has left his mark on problems involving the tilability of a region by calissons. We recall the two main results. The first  theorem characterizes simply connected regions $R$ tilable by calissons. 

\begin{theorem}[W. Thurston \cite{Thurston2}]\label{Thurston}
    A simply connected region $R\subset \triangle$ is tilable by calissons if and only if for any pair $u,v$ of vertices on the edge of $R$, we have $h(u)-h(v)\leq d(u,v)$ where $h$ denotes the height computed from a vertex on the edge of $R$ and where $d(u, v)$ is the distance between $u$ and $v$ in the graph with vertices $\triangle ^0 \cap R$ and edges oriented in the directions $-\varphi(1,0,0)$, $-\varphi(0,1,0)$ and $-\varphi(0,0,1)$. 
\end{theorem}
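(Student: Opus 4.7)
The plan is to prove both implications by exploiting the correspondence between calisson tilings of $R$ and integer-valued height functions on $R \cap \triangle^0$ extending the boundary height $h$.

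For necessity, lift a given tiling $T$ to a stepped surface in $\R^3$ whose $\varphi$-projection is $T$, and read off the height function $\tilde h : R \cap \triangle^0 \to \Z$ from the lift. By construction $\tilde h$ extends $h$ on $\partial R$, and along every directed edge $(a,b) \in \triangle^1 \cap R$ in a direction $-\varphi(e_i)$ one has $\tilde h(b) - \tilde h(a) \in \{+1, +2\}$, with value $+1$ on edges separating two distinct calissons of $T$ and $+2$ on interior diagonals of calissons of $T$. Summing this lower bound along the shortest directed path from $u$ to $v$ gives $\tilde h(v) - \tilde h(u) \geq d(u,v)$, and specializing to boundary pairs produces $h(u) - h(v) \leq -d(u,v) \leq d(u,v)$, as required.

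For sufficiency, assume the inequality holds on every boundary pair and proceed by induction on the area of $R$ via a peeling procedure in the spirit of Thurston's algorithm. Locate a vertex $w_0 \in \partial R$ where $h$ attains an extremum and the local geometry of $R$ admits a single calisson incident to $w_0$; the hypothesis is precisely what rules out the obstructive configurations and guarantees such a vertex exists. Remove this calisson, replacing $R$ with a smaller simply connected region $R'$ whose new boundary height is obtained from $h$ by the local calisson move. The hypothesis transfers to $R'$ because directed distances in $R'$ are bounded above by those in $R$ and the only new boundary height changes are $\pm 1$ on the edges of the removed calisson that now lie in $\partial R'$. By induction $R'$ is tileable, and reattaching the removed calisson produces a tiling of $R$.

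The main obstacle will be the inductive step in the sufficiency direction: verifying that a suitable extremal vertex $w_0$ always exists, and that removing its calisson preserves both simple-connectedness and the inequality on the reduced boundary. The existence of $w_0$ requires translating the global inequality into a local criterion at each boundary vertex; simple-connectedness prevents the peeling from ever disconnecting the region; and the distance monotonicity $d_{R'} \leq d_R$ together with the controlled change in $h$ across the calisson yields the inductive inequality. This inductive mechanism is precisely the perspective formalized later by the advancing surface algorithm.
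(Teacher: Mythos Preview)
The paper does not give its own proof of Theorem~\ref{Thurston}; it is quoted as Thurston's result, and the only argument offered is the remark in Section~\ref{wwwalll} that it is the special case $X=\emptyset$ of the equivalence $(1)\Leftrightarrow(4)$ in Theorem~\ref{monte} (tilability of $R$ $\Leftrightarrow$ no absorbing cycle in the weighted projected graph $\varphi(\overline\square^3_R,\wedge_R\cup\vee_R)$). Your sketch takes the classical direct route---necessity via a Lipschitz bound on the tiling height, sufficiency via a peeling induction---which is essentially Thurston's original argument and is more elementary than the paper's DAG-cut / absorbing-cycle machinery.

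Two concrete slips, however, would block the sketch as written. In the necessity direction, the range $\tilde h(b)-\tilde h(a)\in\{+1,+2\}$ is impossible: along a grid edge in direction $-\varphi(e_i)$ the stepped-surface height changes by $-1$ on a tiling edge and by $+2$ across the diagonal of the calisson with normal $e_i$ (or $\{+1,-2\}$ under the opposite sign convention); exactly one of the two values must be negative. Your range forces the intermediate conclusion $h(u)-h(v)\le -d(u,v)$, which already fails for adjacent boundary vertices with $h(u)=h(v)+1$. The correct per-edge bound is $\tilde h(b)-\tilde h(a)\ge -1$, and summing along a shortest directed $u\to v$ path then yields $h(u)-h(v)\le d(u,v)$ directly. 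In the sufficiency direction, the distance monotonicity is stated backwards: removing a calisson shrinks the region, so $d_{R'}\ge d_R$, not $d_{R'}\le d_R$. This is in fact the direction you need---a larger right-hand side only helps the inductive inequality---but the sentence as written does not establish it, and you still owe a separate check for the new boundary vertex created by the peel, whose height is fixed by the removed calisson rather than inherited from $h$ on $\partial R$.
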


The second result is an optimal algorithm for determining whether a simply connected region can be tiled by calissons and providing a solution tiling if there exists one.

\subsection{Thurston's Algorithm}
The algorithm is illustrated Fig.~\ref{ThurstonAlgorithm}. It is a  beautiful algorithm simply based on heights computations. We decompose it in two steps.

\begin{figure}[ht]
  \begin{center}
		\includegraphics[width=0.85\textwidth]{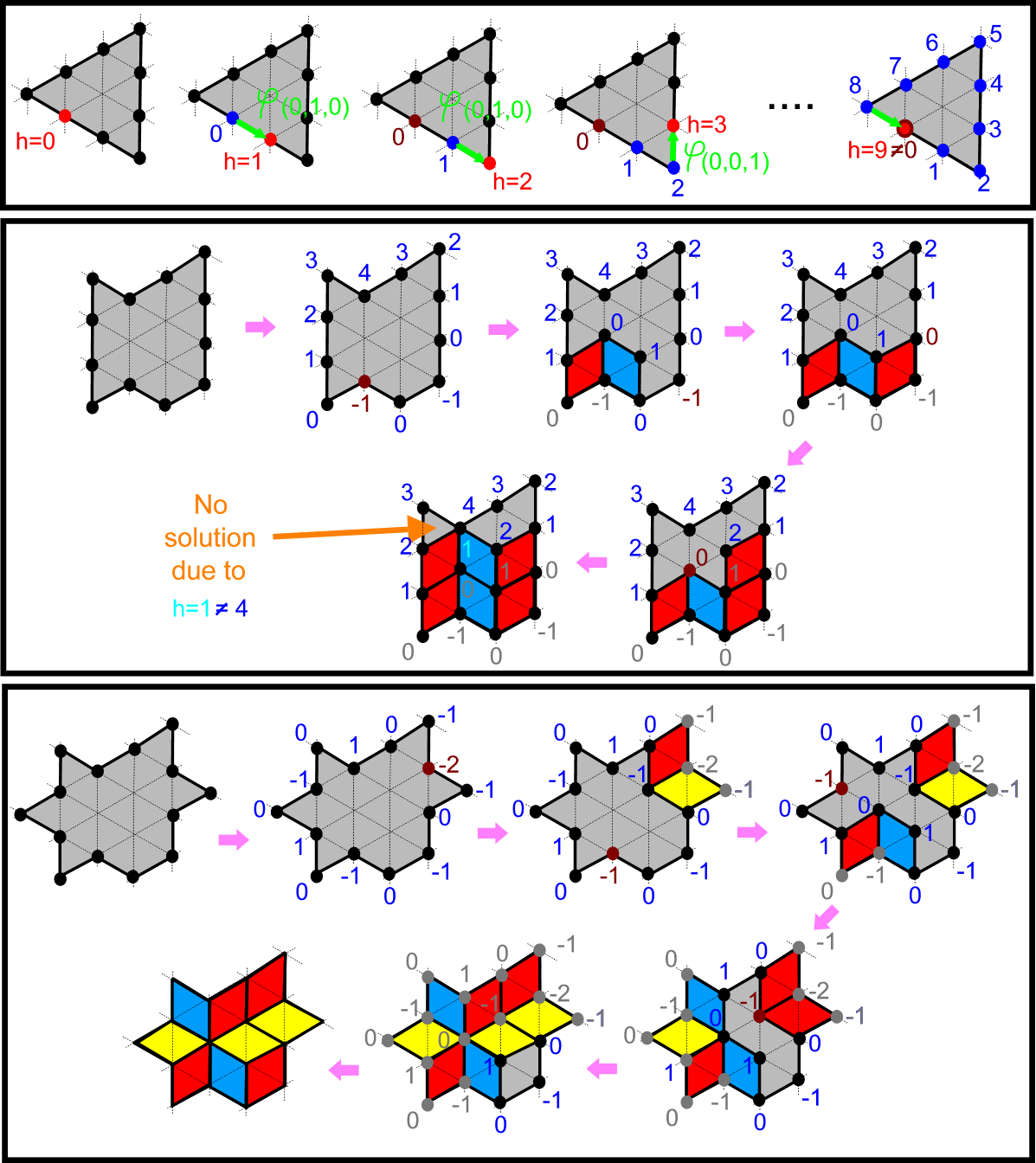}
	\end{center}
	\caption{\label{ThurstonAlgorithm} \textbf{Execution of Thurston's algorithm} on two instances where it shows that the region cannot be tiled, then a third instance for which it achieves the computation of the tiling of maximum height.}
\end{figure}

\begin{enumerate}
    \item Start from a vertex on the boundary $\partial R \subset \triangle ^0$  of the region $R$ to be tiled, and set its height to $0$. Then follow the edges of the boundary and increase the height by $1$ for a step $\varphi (1,0,0)$, $\varphi (0,1,0)$, $\varphi (0,0,1)$ or decrease it by $1$ for a step $-\varphi (1,0,0)$, $-\varphi (0,1,0)$, $-\varphi (0,0,1)$. 
    If, on returning to the starting point after the tour of $R$, the height is different from $0$, then the region $R$ is not tilable. If the height is $0$ after one turn, proceed to the next step.
    \item The second step consists in progressively tiling the region $R$ from its boundary.  The remaining region to be tiled is denoted $R'$ and its boundary $\partial R'$.     
    The algorithm repeats the following routine. Select a vertex $s$ of the path $\partial R'$ of minimum height. Tile it so that the vertices adjacent to $s$ in the tiling have a larger height. In other words, the edges of the new calisson(s) from $s$ must be directed by $\varphi (1,0,0)$, $\varphi (0,1,0)$ or $\varphi (0,0,1)$. Then compute the heights of the new vertices of $\partial R'$.
    Repeat the second step until one of the following two situations is reached:
\begin{itemize}
    \item An inconsistency arises because we want to overlap a vertex on the edge of $R$ with a new vertex of smaller height. In this case, according to Theorem \ref{Thurston}, there is no solution because we have $h(u)-h(v)\leq d(u,v)$ between two vertices $u$ and $v$ on the edge of $R$.
    \item In the second case, the  region $R$ is decimated until an empty $R'$ region is obtained. The region $R$ is tiled by calissons.
\end{itemize}
\end{enumerate}

We have a symmetrical version of the algorithm in which  vertices $s$ of maximum height are tiled with calissons whose edges are directed from $s$ by $-\varphi (1,0,0)$, $-\varphi (0,1,0)$, $-\varphi (0,0,1)$. 
These two versions of the algorithm respectively provide a maximum-height tiling and a minimum-height tiling. 

The complexity of Thurston's algorithm is linear in the size of the region ($O(|R|)$), i.e. linear in the size of the solution tiling. It is optimal.

For more details on domino and calisson tilings problems, apart from Thurston's work \cite{Thurston,Thurston2}, there is of course a large literature on the subject. See, for example, \cite{over} or Vadim Gorin's recent book \cite{gorin_2021}.\\

\textbf{What else ?} Thurston's results have a definitive character, as they elegantly and optimally solve a natural geometric problem. Nevertheless, we take on the challenge of revisiting them in the light of the calissons puzzle. 
The puzzle is more general than a simple tilability problem, it introduces other constraints and can be posed in an infinite region.
Thurston's algorithm cannot solve it.
This perspective, at the frontier of computer science and mathematics, with discrete structures and classical algorithms, provides an enlightening vision of the subject. It allows us to understand in depth the nature of Thurston's inheritance... and to extend it a little further.

\section{Matching and 3-SAT}\label{fail}

A reasonable idea for solving calissons puzzles is to use classical techniques from tîling problems. We already noticed that Thurston's algorithm cannot take account of the interior edges of $X$, nor of saliency constraints. It is therefore unable to solve the calissons puzzles. 

However, there are other approaches, either used for tilability by dominos
or for general combinatorial problems. Two methods are worth examining. The first reduces the problem to $3$-SAT, while the other involves the computation of a matching in a bipartite graph.

\subsection{3-SAT}

The calissons puzzle is easily expressed as a 3-SAT formula.
Consider a variable $a_c$ for each calisson $c$ in $\Diamond _n$. It is equal to $1$ if the calisson $c$ is included in the solution's tiling and $0$ otherwise.

We have four classes of clauses.

\begin{enumerate}
    \item The first clauses express the conditions that all triangles of $\triangle ^2 _n$ must be covered by at least one calisson. This constraint is expressed in the form of 3-clauses, since there are no more than three calissons covering a triangle. For each triangle $t\in \triangle ^2 _n$, we impose $a_c \vee a_{c'} \vee a_{c''}$ where $c$, $c'$ and $c''$ are the calissons covering the triangle $t$  (for boundary triangles, these are 2-clauses and even 1-clauses).
    \item The second class of clauses is still necessary to guarantee that we have a tiling: the tiles must not overlap. For each pair $c,c'$ of calissons with a triangle in common, we impose $\overline a_c \vee \overline a_{c'}$ to ensure that there do not overlap.
    \item The third class of clauses expresses the non-overlap constraint (i) of the puzzle. Some variables are set to $0$.
    \item The last class of clauses expresses the saliency constraints (ii). 
    Around an edge for instance covered by the interior of a yellow calisson, the red calisson $c$ on one side imposes a blue calisson $c'$ on the other, and vice-versa. We thus have clauses  $\overline a_c \vee a_{c'}$.
\end{enumerate}

The number of variables and clauses is $O(n^2)$. It provides a simple way of expressing the problem and solving it with a solver. As the 3-SAT problem is NP-complete, this reduction does not allow us to solve the calissons puzzle in polynomial time.

\subsection{Matching}

 A classic, non-exponential approach to compute tilings by dominoes or \textit{dimers} (and calissons are dominoes made up of two adjacent triangles) is to compute a matching in the adjacency graph of triangles (see for example \cite{remila}). This approach is illustrated in Fig.~\ref{mat}.

\begin{figure}[ht]
  \begin{center}
	\includegraphics[width=0.95\textwidth]{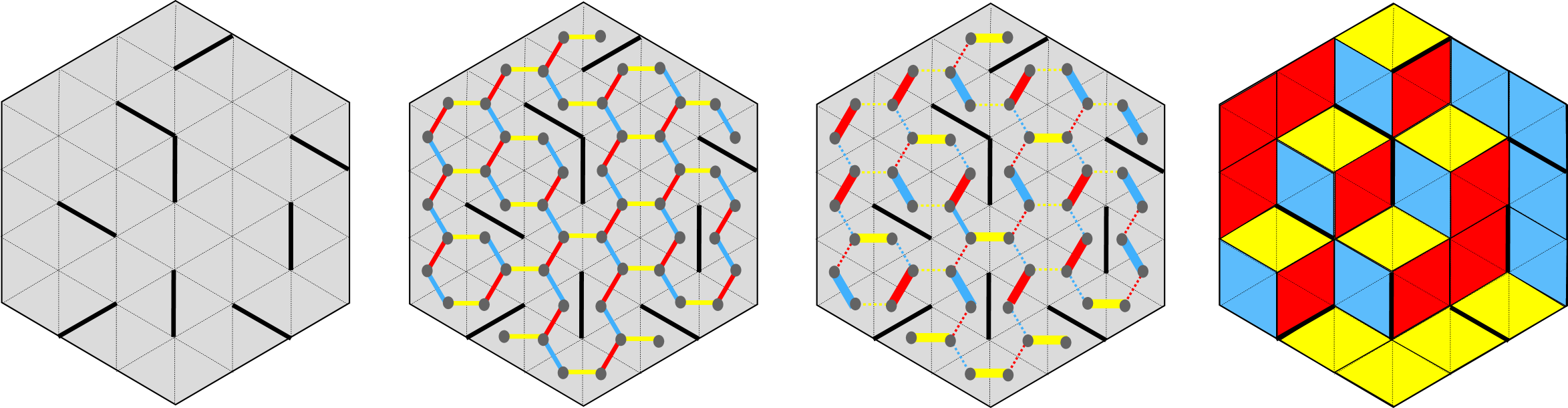}
	\end{center}
	\caption{\label{mat} \textbf{Try to solve a puzzle through a matching computation}. Left, an instance of the calissons puzzle takes the form of a set of edges $X$ of the triangular grid $\triangle _n$. In the center, the adjacency graph of the triangles and a perfect matching of $\triangle_n ^2$. On the right, the computed tiling  satisfies the non-overlap condition (i) but violates the saliency condition (ii).  }
\end{figure}

From the set of edges $X$ of the calissons puzzle instance, we create the graph $\Gamma$ whose vertices are the triangles of the grid and whose edges are the pairs of adjacent triangles that are not separated by an edge of $X$. A perfect matching of the $\Gamma$ graph is computed. If there is no solution, the calissons instance admits no solution. If there is a perfect matching $M$ of $\Gamma$, then $M$ provides a tiling of $\triangle _n$ that satisfies the edge non-overlap rule (i) but may violate the saliency conditions (ii), as is the case on the right-hand side of the example shown in Fig.~\ref{mat}.

To take account of saliency constraints (ii), we might want to adapt the matching algorithms so as to guarantee that if one edge is chosen, so is another. However, this seems unrealistic, as it can easily be shown that such associations harden the matching problem. The problem of computing intersection-free matching in a geometric graph, for example, is NP-hard, which is all the more detrimental as we can easily reduce \texttt{Calissons$(X,\triangle _n)$} to this problem. 
In other words, the matching approach does not solve the calissons puzzle. 

\section{The Advancing Surface Algorithm for solving \texttt{Calissons$(X,\triangle _n)$}}\label{step}

For solving the calissons puzzles \texttt{Calissons$(X,\triangle _n)$}, we start by introducing the 3D notion of \textit{stepped surface} of $\triangle _n$ (term used in \cite{Fernique}). We define them as the cuts of a DAG of vertices in $\square_n$. Then we express the constraints induced by the non-overlap rule (i) and the saliency conditions (ii) on the DAG in order to analyse the problem according to this perspective. 

\subsection{Stepped Surface of $\triangle _n$ as DAG cuts}
We first introduce the stepped surfaces above $\triangle _n$.
We complete the set of cubes $\square ^3 _n$ (its cubes are $(x,y,z)+C$ with $0\leq x \leq n-1$, $0\leq y \leq n-1$, $0\leq z \leq n-1$) with two other sets  denoted $\back _n$ and $\front _n$. The set $\back _n$ contains the cubes $(x,y,z)+C$ with two integral coordinates between $0$ and $n-1$ and the last coordinate equal to $-1$.
The set $\front _n$ contains the cubes $(x,y,z)+C$ with two integral coordinates between $0$ and $n-1$ and the last coordinate equal to $n$.

Then we introduce a first DAG structure $\H=(\square ^3,\wedge)$ on the whole set of cubes $\square ^3$ with an edge from any cube $(x,y,z)+C$ to the cubes  $(x+1,y,z)+C$, $(x,y+1,z)+C$ and $(x,y,z+1)+C$. We use the notation $\wedge$ for the set of edges since they are ascendant according to the height $x+y+z$. Notice that each edge of $\H$ can be represented geometrically by the common face of the two cubes. 

We denote $\H _n$ the induced graph of $\H$ on the set of vertices $\back _n \cup \square _n \cup \front _n$. In other words, we have the DAG $\H _n=(\back  _n\cup \square ^3_n \cup \front _n, \wedge)$. The transitive closure of $\H _n$ is a partial ordered set (poset). This partial order relation is denoted $(x,y,z)+C \leq (x',y',z')+C $ so that we have $(x,y,z)+C \leq (x',y',z')+C $ if and only if $x\leq x'$ and $y\leq y'$ and $z\leq z'$. Incomparable cubes are denoted by $(x,y,z)+C \sim (x',y',z')+C $.

\begin{figure}[ht]
  \begin{center}
		\includegraphics[width=0.95\textwidth]{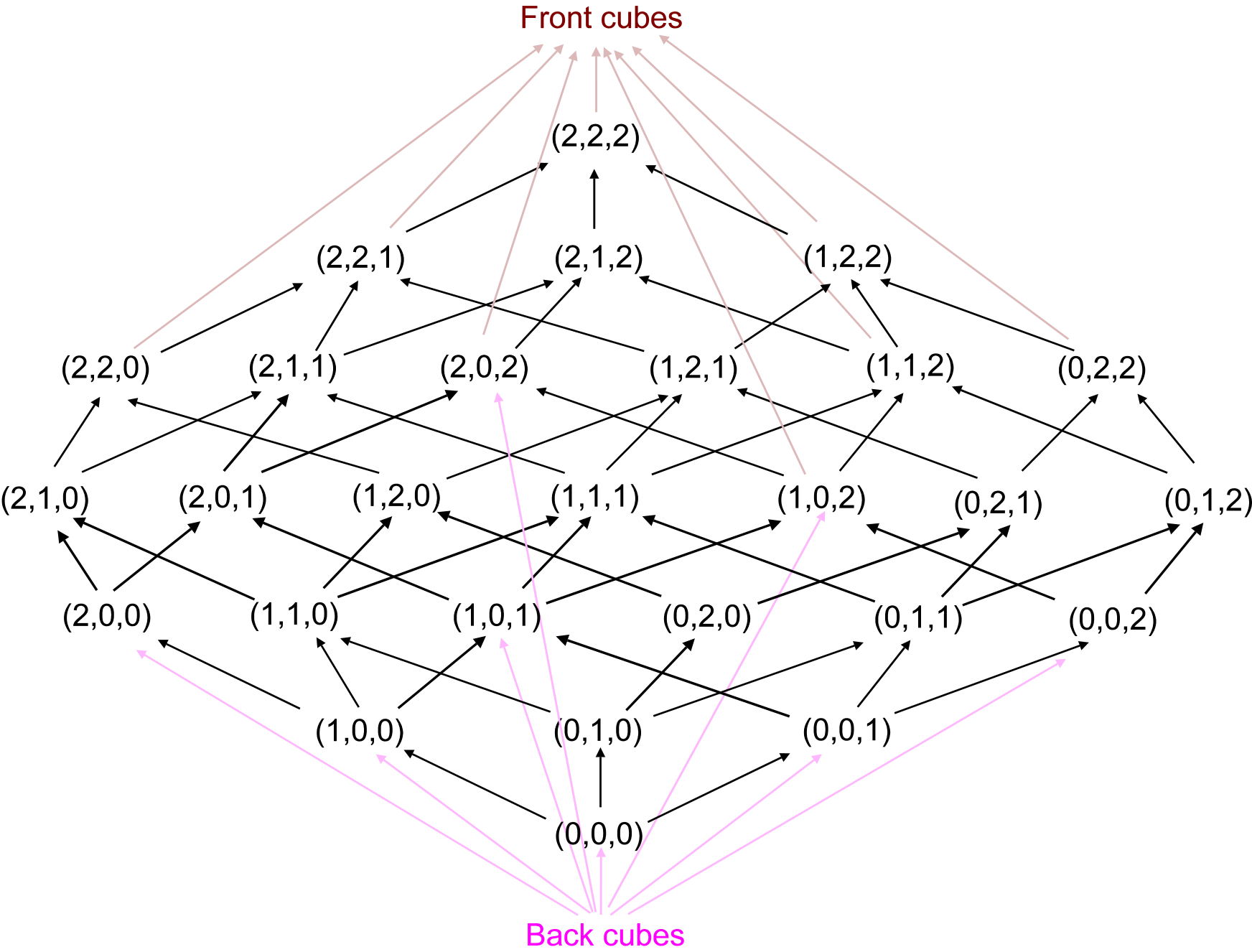}
	\end{center}
	\caption{\label{Hasse} \textbf{The DAG $\H _n$ of set of vertices $ \mathrm{Back} _n \cup \square _n \cup \mathrm{Front} _n$} for $n=3$. The cubes of $\mathrm{Back} _n$ and $\mathrm{Front} _n$ are not individually represented, neither all the edges issued or arriving to their vertices. Each edge of the DAG $\H_n$ corresponds to the common face of a pair of cubes  }
\end{figure}

\textbf{DAG Cut.}
 There is a general notion of a \textit{cut in a graph} that we call \textit{graph cut}. It is a partition of the set of vertices into two parts, and we are particularly interested in the edges going from one part to the other. 
There is another notion of a cut in a poset or DAG which is  more restricted and that we call equivalently \textit{DAG cut} or \textit{poset cut}. 
In a poset, a set is said to be \textit{low} if it is the union of all elements less than or equal to its elements, and \textit{high} if it is the union of all elements greater than or equal to its \cite{Abian} elements.  Given a low  part  of a poset, its complementary is necessarily high, and vice versa. 
A \textit{poset cut} is then a non-trivial partition (no empty parts), of the set of vertices into a lower part $L$ and an upper part $H$. A \textit{DAG cut} of a DAG $\Gamma$ is the poset cut of the transitive closure of $\Gamma$. Rather than focusing on the subsets $L$ and $H$, it is natural to look at the edges of the DAG from $L$ to $H$. 

\begin{definition}
A \textit{stepped surface} of $\triangle _n$ is the set of the edges $E\subset \wedge$ of the DAG $\H =(\square ^3 , \wedge)$  going from the lower part to the upper part of a DAG cut of $\H _n$ separating $\back _n$ from $\front _n$  (Fig.~\ref{cutHasse}).
\end{definition}

\begin{figure}[ht]
  \begin{center}
		\includegraphics[width=0.95\textwidth]{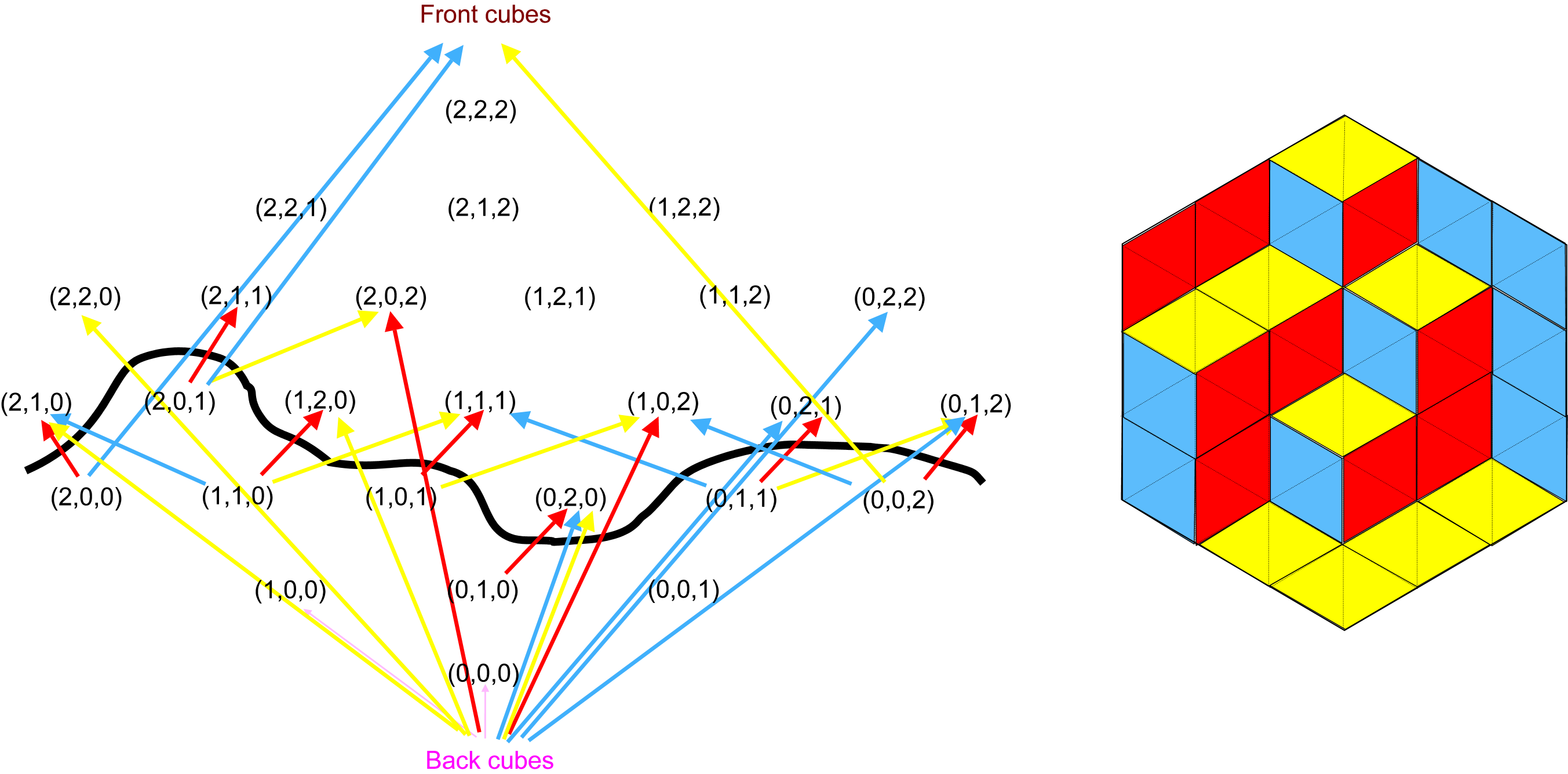}
	\end{center}
	\caption{\label{cutHasse} \textbf{A cut of the DAG $\H _n$  and the calisson tiling of the corresponding stepped surface}. }
\end{figure}

The projection $\varphi$ is a one-to-one map between the stepped surfaces and the calisson tilings of  $\triangle _n$. This theorem can be seen as folklore. We do not prove it but a close theorem -Theorem~\ref{monte}- relating tilings and cuts is proved in the later.  


\subsection{Constraints}\label{res}

Thanks to the one-to-one map $\varphi$ between calisson tilings and stepped surface,  the calissons puzzle consists in determining a stepped surface that satisfies the constraint (i) of not overlapping the edges of $X$ and the saliency constraint (ii). 

Given an edge $e$ in $X$, what is the condition on the DAG cuts of the constraints (i) and (ii) imposed by $e$? 
The translation of these constraints onto a stepped surface can be expressed through the following lemmas:

\begin{lemma}\label{constraintz}
We consider a vertical edge $e=\varphi(x,y,z),\varphi(x,y,z+1)\in \triangle ^1 _n$. The cubes of $ \square ^3$ one of whose projected faces is adjacent or overlapping $e$ are denoted  $L_k=(x+k,y+k-1,z+k)+C$, $R_{k}=(x+k-1,y+k,z+k)+C$,
$F_k=(x+k,y+k,z+k)+C$ and $B_{k}=(x+k-1,y+k-1,z+k)+C$ (Fig.~\ref{LRBF}). 

The calisson tiling of the stepped surface $S$ satisfies the non overlapping constraint (i) of $e$ if and only if the  cut $S$  does not separate a pair of cubes $F_k$ and $B_{k+1}$.

The calisson tiling of the stepped surface $S$ satisfies the saliency  constraint (ii) of $e$ if and only if the cut $S$  separates neither a pair of cubes $F_k$ and $B_{k+1}$ (this is constraint (i)), nor a pair of cubes $L_k$ and $R_{k}$.

The oriented edges $F_k\rightarrow B_{k+1}$, $B_{k+1} \rightarrow F_k$, $L_k \rightarrow R_{k}$ and $R_k \rightarrow L_{k}$ of $\H$ are said \textit{unbreakable}.
\end{lemma}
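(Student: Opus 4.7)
The plan is to read off each condition from the bijection between calissons and cut edges of $\H$, after identifying which DAG edges project to calissons touching $e$.

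I would first identify the cubic edges of $\square^1$ projecting to $e$. Since $\varphi(a,b,c)=\varphi(a',b',c')$ forces $(a,b,c)-(a',b',c')$ to be a multiple of $(1,1,1)$, these are exactly the vertical cubic edges from $(x+k,y+k,z+k)$ to $(x+k,y+k,z+k+1)$ for $k\in\Z$. For each $k$, the four cubes incident to this 3D edge are precisely $F_k,L_k,R_k,B_k$, and the four cubic faces having this edge on their boundary are the cube-to-cube faces between adjacent pairs of these cubes. A short normal computation shows that $F_k$-$L_k$ and $B_k$-$R_k$ sit in planes of normal $\pm(0,1,0)$ (red), while $F_k$-$R_k$ and $B_k$-$L_k$ sit in planes of normal $\pm(1,0,0)$ (blue); the associated DAG edges are $L_k\to F_k$ (red), $R_k\to F_k$ (blue), $B_k\to L_k$ (blue), $B_k\to R_k$ (red). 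Separately, the horizontal face between $F_k$ and $B_{k+1}$ has a diagonal that projects to $e$, so its projection is a yellow calisson whose interior contains $e$; the associated DAG edge is $F_k\to B_{k+1}$.

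For (i), the bijection between stepped surfaces and tilings turns calissons of the tiling into DAG edges in the cut. A calisson overlapping $e$ must be yellow and, by the enumeration above, can only come from some $F_k\to B_{k+1}$. Since $F_k\leq B_{k+1}$ via this very edge, it lies in $S$ iff $F_k$ is in the low part and $B_{k+1}$ in the high part, iff $S$ separates this pair. Hence (i) holds iff no pair $(F_k,B_{k+1})$ is separated.

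For (ii) I would perform a case analysis on how the slice $\{F_k,L_k,R_k,B_k\}$, constrained by $B_k\leq L_k,R_k\leq F_k$, is split by the cut. In the four admissible cases where $L_k$ and $R_k$ lie on the same side of the cut, the slice contributes either no around-$e$ edge (all four cubes on a single side) or the bichromatic pair $\{L_k\to F_k,R_k\to F_k\}$ (red, blue, when only $F_k\in H$) or $\{B_k\to L_k,B_k\to R_k\}$ (blue, red, when only $B_k\in L$); in the two remaining cases, with $L_k$ and $R_k$ on opposite sides, the slice contributes two blue edges $\{R_k\to F_k,B_k\to L_k\}$ or two red edges $\{L_k\to F_k,B_k\to R_k\}$, violating saliency. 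Since (ii) requires both non-overlap (which is (i)) and that adjacent calissons differ in color, it holds iff $S$ separates neither any $(F_k,B_{k+1})$ nor any $(L_k,R_k)$, matching the unbreakability condition for the four listed oriented edges.

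The main obstacle is the bookkeeping of left/right orientation around $e$, namely verifying that the four around-$e$ faces form a genuine color pinwheel in $\triangle$ so that in each bad case the two cut edges actually lie on opposite sides of $e$; once the pinwheel is pinned down, the enumeration is routine.
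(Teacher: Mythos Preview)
Your proposal is correct and follows essentially the same approach as the paper: enumerate the cubes whose faces project to calissons touching $e$, observe the order relations $B_k\leq L_k\sim R_k\leq F_k\leq B_{k+1}$, and perform a four-case analysis on where the cut falls. The paper foregrounds the ``almost totally ordered chain'' $\cdots\leq B_k\leq L_k\sim R_k\leq F_k\leq B_{k+1}\leq\cdots$ before the case split, whereas you organize the same analysis slice by slice and compute the face colors explicitly; the content is the same.
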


\begin{proof}
The key point is that the union of the four cube sequences $L_k$, $R_k$, $B_k$, $R_k$ (Fig.~\ref{LRBF}) is almost totally ordered according to the partial order relation $\leq$ given by the transitive closure of the DAG $\H =(\square ^3  , \wedge )$. Only $L_k$ and $R_k$ are incomparable. 
Then we have 
$$\cdots \leq F_{k-1} \leq B_{k} \leq L_{k} \sim R_{k} \leq F_{k} \leq B_{k+1} \leq L_{k+1} \sim R_{k+1} \leq  F_{k+1} \leq \cdots$$ 

\begin{figure}[ht]
  \begin{center}
		\includegraphics[width=0.65\textwidth]{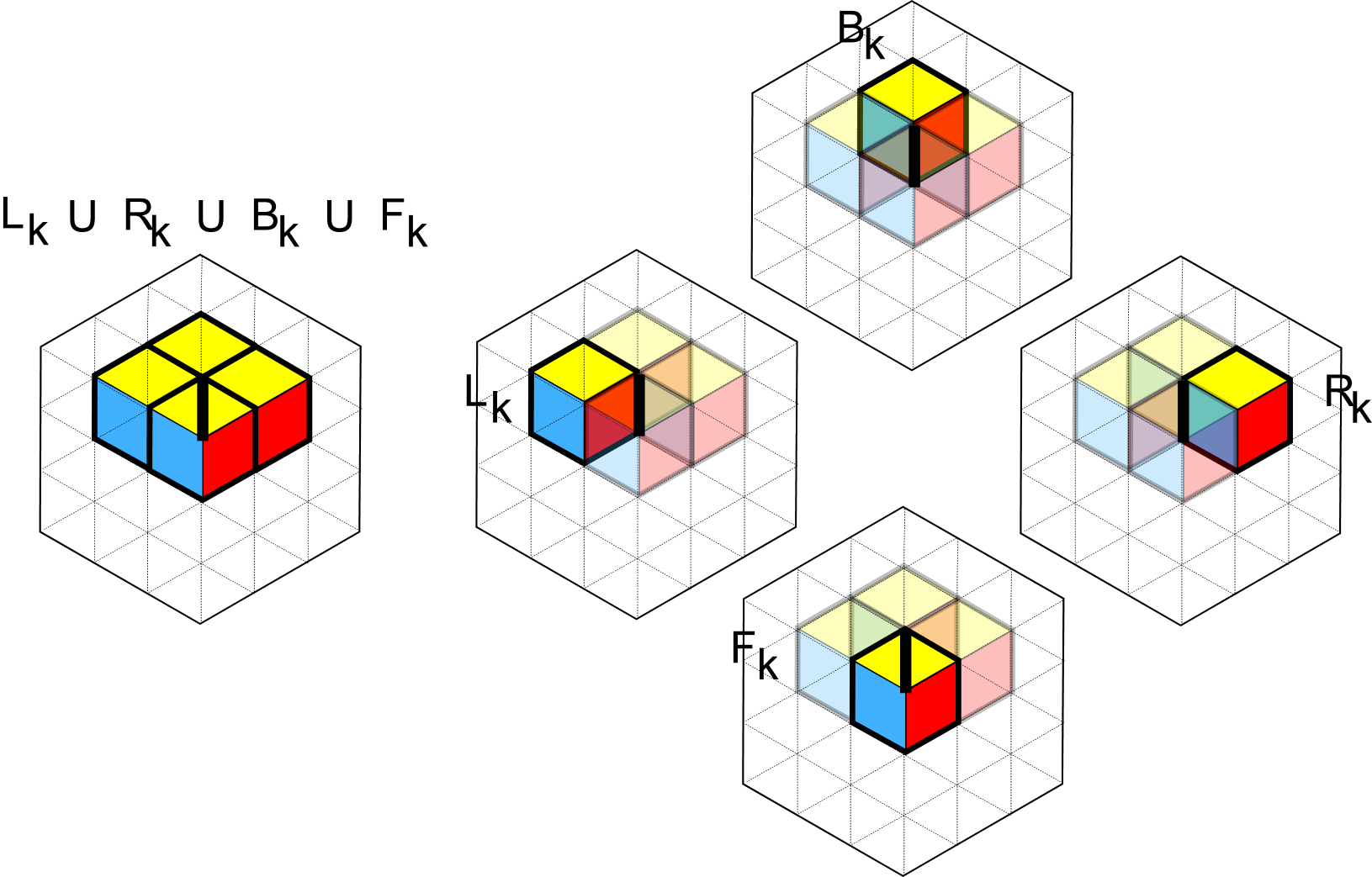}
	\end{center}
	\caption{\label{LRBF} \textbf{The $B_k$, $L_k$, $R_k$ and $F_k$ cube} for a given height $k$. They have a face $f$ whose projection $\varphi (f)$ is a calisson adjacent to or overlapping the edge $e$.}
\end{figure}

We have a chain of calissons 
and any stepped surface intersects it at a certain level. Within one index shift, we have four different DAG cut cases, illustrated in Fig.\ref{fourcases} and each giving a different configuration around the  edge $e$:
\begin{enumerate}
    \item The DAG cut separates $B_k$ and the two cubes $L_{k} \sim R_{k}$. In this case, the stepped surface contains the face common to $B_k$ and $L_k$ and the face common to $B_k$ and $R_k$. These are the two faces adjacent to $e$ and they are of different colors. Conditions (i) and (ii) are satisfied.    \item The DAG cut separates the two cubes $L_{k}$ and $R_{k}$. We have the sub-case where $L_{k}$ is under the DAG cut/behind the surface and $R_{k}$ is in front of the stepped surface. In this sub-case 2, the stepped surface contains the face common to $L_k$ and $F_k$ and the face common to $B_k$ and $R_k$. These are the two faces adjacent to $e$ and they are both red.
    Then there's the sub-case where $R_{k}$ is under the DAG cut/behind the surface and $L_{k}$ is in front of the stepped surface. In this sub-case 2', the stepped surface contains the face common to $B_k$ and $L_k$ and the face common to $R_k$ and $F_k$. These are the two faces adjacent to $e$ and they are both blue.
    In these two sub-cases, condition (i) is satisfied and condition (ii) is violated.
    \item The DAG cut separates the two cubes $L_{k} \sim R_{k}$ from $F_k$. In this case, the stepped surface contains the face common to $L_k$ and $F_k$ and the face common to $R_k$ and $F_k$. These are the two faces adjacent to $e$ and they are of different colors. In this case, both conditions (i) and (ii) are satisfied.
    \item The DAG cut separates $F_k$ and $B_{k+1}$. In this case, the stepped surface contains the face $f$ common to $F_k$ and $B_{k+1}$. The projected calisson $\varphi (f)$ of this face overlaps the edge $e$. In this case, condition (i) is violated.
\end{enumerate}

\begin{figure}[ht]
  \begin{center}
		\includegraphics[width=0.95\textwidth]{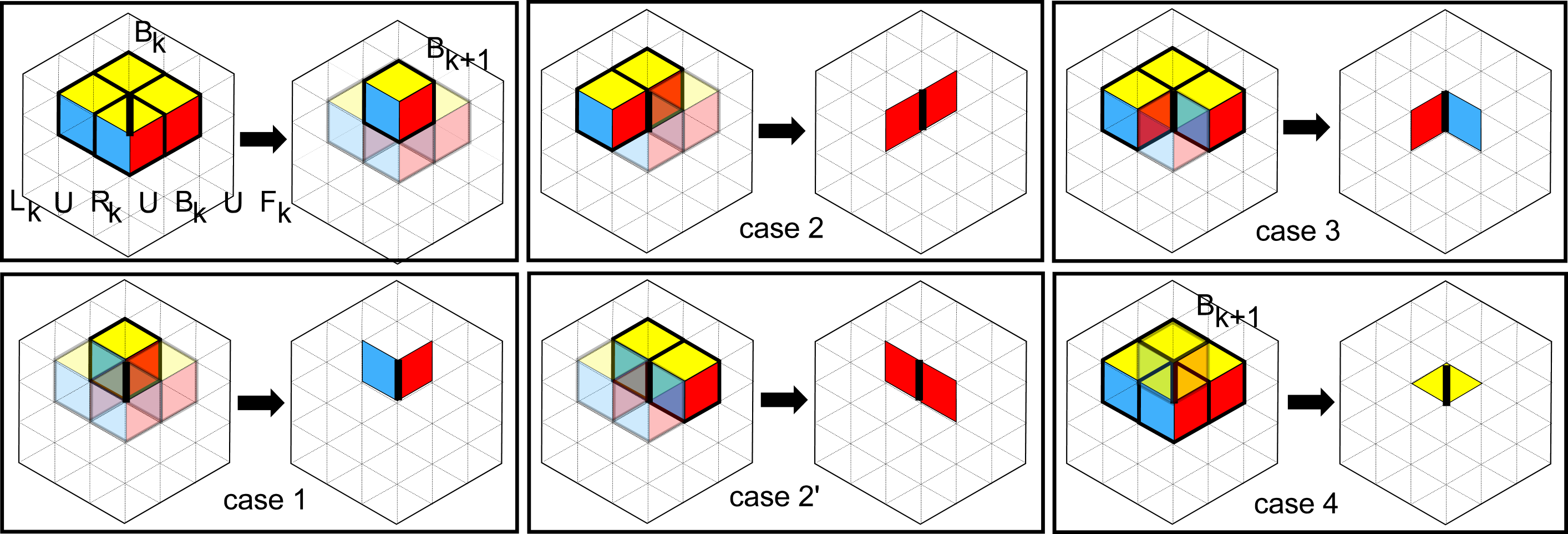}
	\end{center}
	\caption{\label{fourcases} \textbf{Tiling configurations around the edge $e$.} We have four different cases depending on where the DAG cut of $\H$ cuts the almost totally ordered sequence of cubes $\cdots \leq B_{k} \leq L_{k} \sim R_{k} \leq F_{k} \leq B_{k+1}  <\cdots$. }
\end{figure}

\end{proof}

We have similar lemmas for non-vertical edges. 

\begin{lemma}\label{constraintx}
We consider an edge $e=(x,y,z),(x+1,y,z)\in \triangle ^1 _n$. The cubes of $ \square ^3 $ one of whose projected faces is adjacent or overlapping $e$ are denoted $L_k=(x+k,y+k-1,z+k)+C$, $R_{k}=(x+k,y+k,z+k-1)+C$,
$F_k=(x+k,y+k,z+k)+C$ and $B_{k}=(x+k,y+k-1,z+k-1)+C$. 

The calisson tiling of the stepped surface $S$ satisfies constraint (i) of $e$ if and only if it does not separate a pair of cubes $F_k$ and $B_{k+1}$.

The calisson tiling of the stepped surface $S$ satisfies constraint (ii) of $e$ if and only if it separates neither a pair of cubes $F_k$ and $B_{k+1}$ (this is constraint (i)), nor a pair of cubes $L_k$ and $R_{k}$.

The oriented edges $F_k \rightarrow B_{k+1}$, $B_{k+1} \rightarrow F_k$, $L_k \rightarrow R_{k}$ and $R_k \rightarrow L_{k}$ of $\H$ are said to be \textit{unbreakable}.
\end{lemma}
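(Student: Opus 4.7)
The plan is to mirror the proof of Lemma~\ref{constraintz} almost verbatim, since the only change is that the edge $e$ now lies in the direction $\varphi(1,0,0)$ instead of being vertical, and the labels $L_k, R_k, F_k, B_k$ have been rotated accordingly. The geometric setup is symmetric under permutation of the three coordinate axes, so the same chain argument should go through. The main task is to verify the combinatorial chain and to re-do the case analysis with the new cube labels.

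First, I would check directly from the coordinates that the cubes $B_k=(x+k,y+k-1,z+k-1)+C$, $L_k=(x+k,y+k-1,z+k)+C$, $R_k=(x+k,y+k,z+k-1)+C$ and $F_k=(x+k,y+k,z+k)+C$ pairwise differ by a single unit step in exactly one coordinate when moving from $B_k$ to $L_k, R_k$ and from $L_k, R_k$ to $F_k$, and that $F_k$ to $B_{k+1}$ differs by $+1$ in the $x$-coordinate. This yields $B_k\leq L_k\leq F_k$, $B_k\leq R_k\leq F_k$, $F_k\leq B_{k+1}$, and $L_k\sim R_k$ (they differ by $+1$ in $y$ and $-1$ in $z$ simultaneously, hence are incomparable). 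Consequently the whole sequence
$$\cdots \leq F_{k-1}\leq B_k\leq L_k\sim R_k\leq F_k\leq B_{k+1}\leq L_{k+1}\sim R_{k+1}\leq F_{k+1}\leq \cdots$$
is almost totally ordered, exactly as in the vertical case.

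Next, I would identify geometrically the face shared by each pair of consecutive cubes in this chain and observe that its $\varphi$-projection is a calisson adjacent to, or overlapping, the edge $e=\varphi(x,y,z)\varphi(x+1,y,z)$. The face common to $F_k$ and $B_{k+1}$ is perpendicular to the $x$-axis and projects to the unique calisson whose interior contains $e$; the faces $B_k\!-\!L_k$, $B_k\!-\!R_k$, $L_k\!-\!F_k$, $R_k\!-\!F_k$ project to the four calissons whose boundary contains $e$, and by construction they come in the correct two complementary color pairs (one pair sharing one color, the other pair sharing the other color, as in the vertical lemma).

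Finally, I would run the same four-case analysis as in Lemma~\ref{constraintz}, according to where the DAG cut $S$ separates the chain: (1) between $B_k$ and $\{L_k,R_k\}$, the two selected faces are adjacent to $e$ with different colors, so (i) and (ii) hold; (2) between $L_k$ and $R_k$ (two symmetric sub-cases), both selected faces adjacent to $e$ have the same color, so (i) holds but (ii) fails; (3) between $\{L_k,R_k\}$ and $F_k$, again two differently colored adjacent faces, so (i) and (ii) hold; (4) between $F_k$ and $B_{k+1}$, the selected face projects to a calisson overlapping $e$, violating (i). Reading off these cases gives exactly the stated characterizations and identifies the unbreakable edges $F_k\to B_{k+1}$, $B_{k+1}\to F_k$, $L_k\to R_k$, $R_k\to L_k$. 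The only real obstacle is bookkeeping: making sure that, under the rotation of axes from the vertical case to the $\varphi(1,0,0)$ case, the colors of the four boundary calissons still pair up so that case~(2) is precisely the monochrome failure of saliency; this can be checked once by inspecting the normals of the two shared faces $B_k\!-\!L_k$ and $R_k\!-\!F_k$ (and similarly the symmetric pair), which are both perpendicular to the same coordinate axis.
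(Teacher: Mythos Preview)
Your proposal is correct and is exactly the approach the paper takes: the paper does not give a separate proof of Lemma~\ref{constraintx} but simply notes that ``we have similar lemmas for non-vertical edges,'' relying on the symmetry under permutation of the coordinate axes that you make explicit. Your verification of the chain $\cdots\leq B_k\leq L_k\sim R_k\leq F_k\leq B_{k+1}\leq\cdots$ and the four-case analysis faithfully reproduce the argument of Lemma~\ref{constraintz} in the rotated setting.
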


\begin{lemma}\label{constrainty}
We consider an edge  $e=(x,y,z),(x,y+1,z)\in \triangle ^1 _n$ . The cubes of $ \square ^3 $ one of whose projected faces is adjacent or overlapping $e$ are denoted  $L_k=(x+k-1,y+k,z+k)+C$, $R_{k}=(x+k,y+k,z+k-1)+C$,
$F_k=(x+k,y+k,z+k)+C$ and $B_{k}=(x+k-1,y+k,z+k-1)+C$. 

The calisson tiling of the stepped surface $S$ satisfies constraint (i) of $e$ if and only if it does not separate a pair of cubes $F_k$ and $B_{k+1}$.

The calisson tiling of the stepped surface $S$ satisfies constraint (ii) of $e$ if and only if it separates neither a pair of cubes $F_k$ and $B_{k+1}$ (this is constraint (i)), nor a pair of cubes $L_k$ and $R_{k}$.

The oriented edges $F_k \rightarrow B_{k+1}$, $B_{k+1} \rightarrow F_k$, $L_k \rightarrow R_{k}$ and $R_k \rightarrow L_{k}$ of $\H$ are said to be \textit{unbreakable}.
\end{lemma}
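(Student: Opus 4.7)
The plan is to mirror the proof of Lemma~\ref{constraintz} almost verbatim, since the geometry around a horizontal edge in the $\varphi(0,1,0)$ direction is obtained from the vertical case by a permutation of coordinates. First I would verify the fundamental chain structure: with the cubes relabeled as in the statement, a direct coordinatewise check gives
$$B_k \leq L_k,\ B_k \leq R_k,\ L_k \leq F_k,\ R_k \leq F_k,\ F_k \leq B_{k+1},$$
while $L_k$ and $R_k$ remain incomparable because $L_k$ has $x$-coordinate $x+k-1<x+k$ but $z$-coordinate $z+k>z+k-1$. Thus the same almost totally ordered sequence
$$\cdots \leq F_{k-1}\leq B_k \leq L_k \sim R_k \leq F_k \leq B_{k+1} \leq \cdots$$
is obtained, so any DAG cut of $\H_n$ separating $\back_n$ from $\front_n$ cuts this chain at a well-defined level.

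Next I would enumerate the four cases for how the cut intersects the chain, exactly as in Lemma~\ref{constraintz}. For each case I need to identify which face of $\H$ (i.e.\ which common face of consecutive cubes in the chain) lies in the stepped surface, and to check geometrically which calisson its $\varphi$-projection produces and whether that calisson is adjacent to $e$ or overlaps it. The only new ingredient relative to Lemma~\ref{constraintz} is this geometric identification of the four projected faces adjacent to $e$; once one checks that the faces common to $(B_k,L_k)$, $(B_k,R_k)$, $(L_k,F_k)$, $(R_k,F_k)$ project to the two pairs of same-colored calissons flanking $e$, and that the face common to $(F_k, B_{k+1})$ projects to a calisson overlapping $e$, the four cases yield the same conclusions as before: cases~1 and~3 give two adjacent calissons of different colors (so both (i) and (ii) hold), the two sub-cases of case~2 give two adjacent calissons of the same color (so (i) holds but (ii) fails), and case~4 gives a calisson overlapping $e$ (so (i) fails).

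The main obstacle will be the bookkeeping of the geometry around $e$: verifying without confusion that the projections of the relevant faces really are the four calissons meeting $e$ and that their colors split as claimed. I would handle this by drawing the local picture once, analogous to Figure~\ref{LRBF} and Figure~\ref{fourcases}, and observing that the pattern is obtained from the vertical case by the coordinate permutation exchanging the roles of the second and third axes. From the case analysis one reads off that the non-overlap condition (i) fails exactly when the cut separates some $F_k$ from $B_{k+1}$, and the saliency condition (ii) fails exactly when the cut separates such a pair or some $L_k$ from $R_k$. This identifies precisely the oriented edges $F_k\rightarrow B_{k+1}$, $B_{k+1}\rightarrow F_k$, $L_k\rightarrow R_k$, $R_k\rightarrow L_k$ of $\H$ as the unbreakable edges associated with $e$, completing the proof.
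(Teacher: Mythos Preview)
Your proposal is correct and is exactly what the paper intends: the paper gives no separate proof of Lemma~\ref{constrainty}, merely stating ``We have similar lemmas for non-vertical edges'' after the proof of Lemma~\ref{constraintz}, so mirroring that proof via the coordinate permutation is precisely the expected argument. Your direct verification of the chain $\cdots \leq B_k \leq L_k \sim R_k \leq F_k \leq B_{k+1} \leq \cdots$ and the ensuing four-case analysis match the structure of the proof of Lemma~\ref{constraintz} exactly.
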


We introduce a few notations to denote the sets of unbreakable edges. Given a set of edges $X\subset \triangle ^1$, we denote $\Pi _{(i)}(X)$ the set of unbreakable edges $F_k\rightarrow B_{k+1}$ and $B_{k+1} \rightarrow F_k$ with $k\in \Z$. They express the non-overlap constraint (i).
Let $\Pi _{(ii)}(X)$ denote the set of unbreakable edges $L_k \rightarrow R_{k}$ and $R_k \rightarrow L_{k}$ with $k\in \Z$, which express the saliency constraints (ii) of the edges of $X$. 
Finally, we denote $\Pi (X) = \Pi _{(i)}(X) \cup \Pi _{(ii)}(X)$ the set of all unbreakable edges imposed by $X$.

Given an instance $X$ of the calissons puzzle, the set $\Pi (X)$ contains directed edges with vertices in $\square ^3$. 
According to the lemmas \ref{constraintz}, \ref{constraintx}, \ref{constrainty}, the edges of $\Pi (X)$ do not have to be cut to obtain a stepped surface solution of the puzzle.

\subsection{Reduction}

By considering the calisson tilings as DAG cuts of $\H_n$, the lemmas \ref{constraintz}, \ref{constraintx}, \ref{constrainty} prove the following theorem.

\begin{theorem}\label{insecable}
    An instance \texttt{Calissons}$(X,n)$ admits a solution if and only if the DAG $\H_n$ has a DAG cut separating $\back _n$ from $\front _n$ and cutting no unbreakable edge of $\Pi (X)$.
\end{theorem}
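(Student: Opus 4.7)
The plan is to reduce the theorem directly to the three structural lemmas \ref{constraintz}, \ref{constraintx}, \ref{constrainty} via the bijection $\varphi$ between DAG cuts of $\H_n$ separating $\back_n$ from $\front_n$ and calisson tilings of $\triangle_n$. Once we accept this correspondence (stated as folklore in the excerpt and justified later in Theorem~\ref{monte}), the theorem amounts to the statement that the calisson tiling associated to a stepped surface satisfies the puzzle conditions at every edge of $X$ if and only if the corresponding cut avoids every edge of $\Pi(X)$.

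For the forward direction, I take a tiling $T$ solving \texttt{Calissons}$(X,n)$ and let $S$ be the stepped surface associated to $T$, i.e.\ the edge set of a DAG cut $(L,H)$ of $\H_n$ with $\back_n\subseteq L$ and $\front_n\subseteq H$. For each $e\in X$, I apply whichever of Lemmas \ref{constraintz}, \ref{constraintx}, \ref{constrainty} matches the direction of $e$. The lemmas say that condition (ii) at $e$ (which entails (i)) holds precisely when $S$ separates neither any pair $F_k,B_{k+1}$ (the content of (i)) nor any pair $L_k,R_k$ (the extra content of (ii)). Since both cube pairs lie on the same side of the cut, none of the unbreakable edges $F_k\rightarrow B_{k+1}$, $B_{k+1}\rightarrow F_k$, $L_k\rightarrow R_k$, $R_k\rightarrow L_k$ is crossed by $S$. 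Letting $e$ range over $X$ gives that no edge of $\Pi_{(i)}(X)\cup\Pi_{(ii)}(X)=\Pi(X)$ is cut.

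For the reverse direction, I start with a DAG cut $(L,H)$ of $\H_n$ separating $\back_n$ from $\front_n$ and cutting no edge of $\Pi(X)$, and form the associated stepped surface $S$ and the corresponding tiling $T=\varphi(S)$ of $\triangle_n$. For each $e\in X$, the non-cutting hypothesis means that no pair $F_k,B_{k+1}$ and no pair $L_k,R_k$ indexed by $e$ is split by $(L,H)$. By the appropriate lemma among \ref{constraintz}, \ref{constraintx}, \ref{constrainty}, this is exactly the conjunction of the non-overlap condition (i) and the saliency condition (ii) at $e$. Hence $T$ is a valid solution of \texttt{Calissons}$(X,n)$.

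The only subtlety I anticipate is bookkeeping on the range of the index $k$: the cubes $F_k,B_{k+1},L_k,R_k$ associated with an edge $e$ must actually belong to $\back_n\cup\square^3_n\cup\front_n$ for the corresponding unbreakable edge of $\H$ to lie in $\H_n$. For indices outside this range the relevant cubes sit entirely within $\back_n$ or entirely within $\front_n$, so they are automatically on the same side of any cut that separates $\back_n$ from $\front_n$; the constraint is therefore vacuous in that range. With this routine verification, the equivalence drops out of the three lemmas through the stepped-surface/tiling correspondence.
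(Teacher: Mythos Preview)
Your proposal is correct and follows exactly the paper's approach: the paper states in one sentence that the theorem follows from the folklore bijection between calisson tilings of $\triangle_n$ and DAG cuts of $\H_n$ separating $\back_n$ from $\front_n$, combined with Lemmas~\ref{constraintz}, \ref{constraintx}, \ref{constrainty}. You have simply unpacked both directions of this reduction explicitly (and added a sensible remark on the range of $k$, which the paper leaves implicit).
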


The theorem  \ref{insecable} reduces the calissons puzzle to the computation of a DAG cut of $\H_n$. Three examples one with a solution and two without are illustrated  Fig.~\ref{insecable1}, Fig.~\ref{insecable2} and Fig.~\ref{insecable3}.

\begin{figure}[ht]
  \begin{center}
		\includegraphics[width=0.95\textwidth]{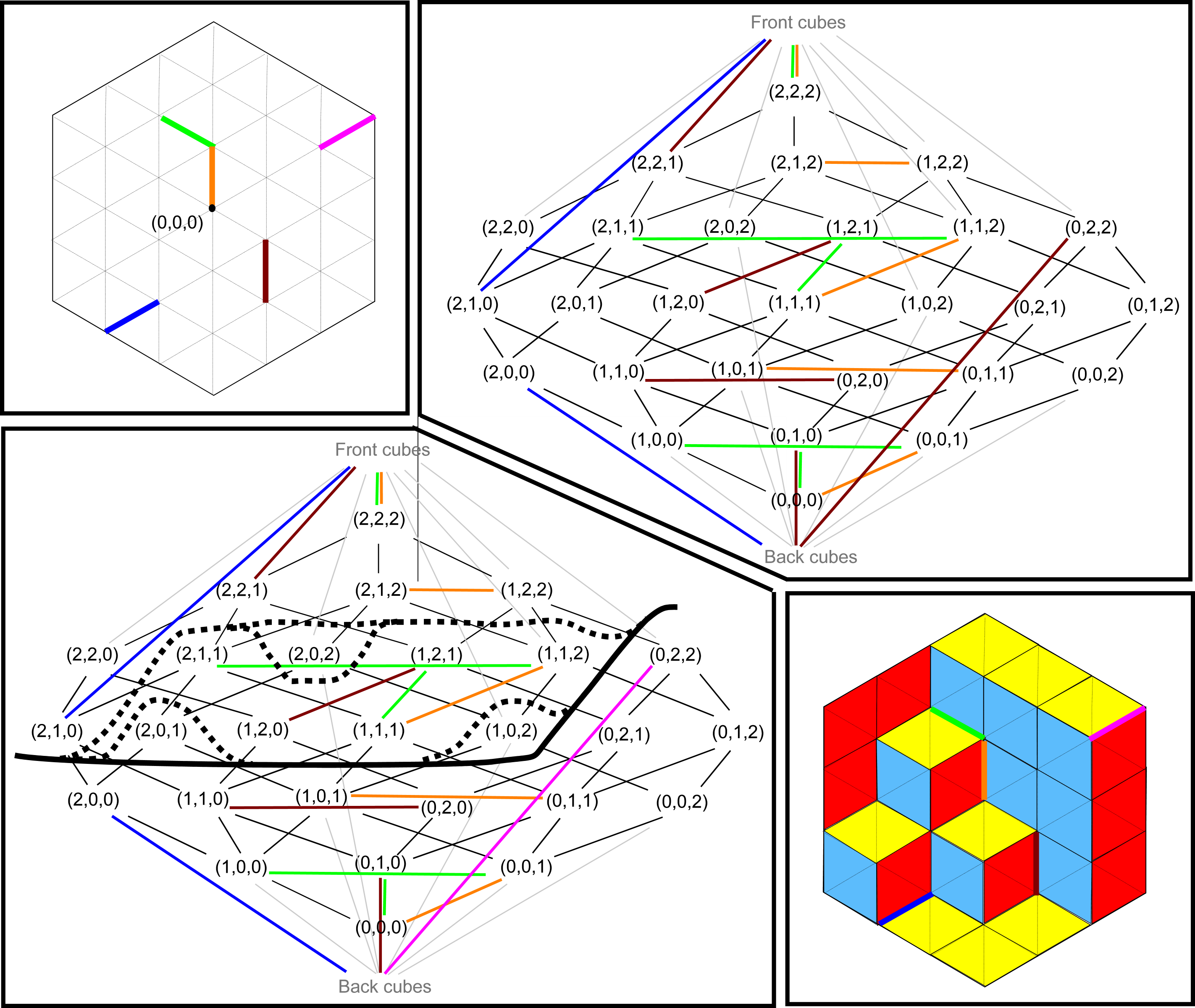}
	\end{center}
	\caption{\label{insecable1} \textbf{Solving calissons puzzles} can be reduced to the computation of a DAG cut of  $\H _n$ which does not cut unbreakable edges (these edges have the color of the  edge $e$ from which they originate). Top left, an instance of a puzzle. Top right, the unbreakable edges of $\H _n$. Bottom, a DAG cut that does not cut an unbreakable edge (and its dotted alternatives) and the corresponding tiling.}
\end{figure}

\begin{figure}[ht]
  \begin{center}
		\includegraphics[width=0.95\textwidth]{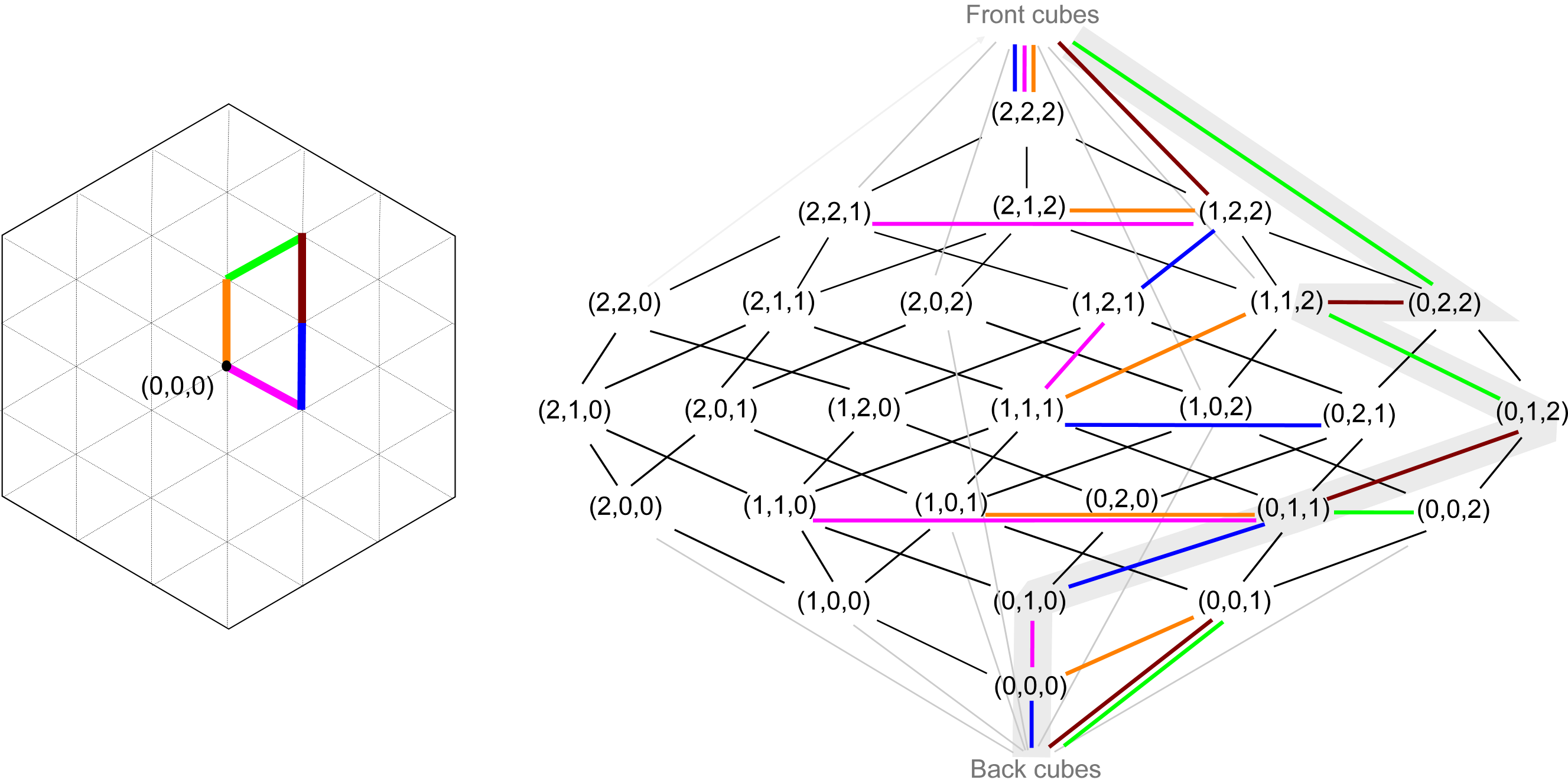}
	\end{center}
	\caption{\label{insecable2} \textbf{Example with no solution.} The absence of a solution is trivial on the tiling because a path surrounds an odd number of triangles and this results in the existence of an unbreakable edge path linking $\back _n$ to $\front _n$ in DAG $\H_n$, which prevents $\back _n$ from being separated from $\front _n$.}
\end{figure}

 \begin{figure}[ht]
  \begin{center}
		\includegraphics[width=0.95\textwidth]{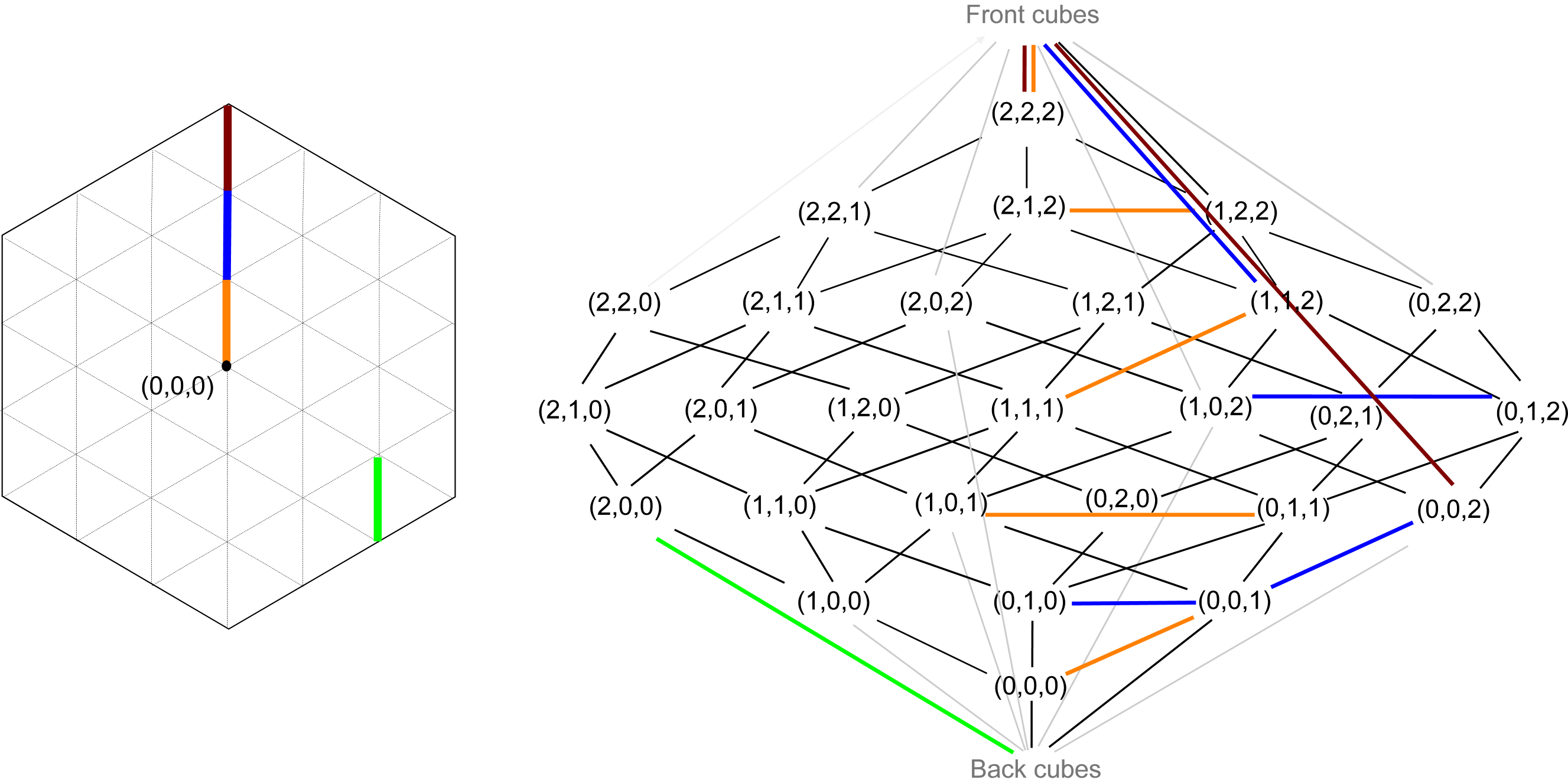}
	\end{center}
	\caption{\label{insecable3} \textbf{Example with no solution and no path connecting $\front _n$ to $\back _n$}. If there is an unbreakable path connecting $\front _n$ from $\back _n$, then the puzzle instance admits no solution, but the converse is false. To get the equivalence, the DAG $\H_n$ has to be completed by the set of unbreakable edges.}
\end{figure}

The computation of graph cuts is a classical algorithmic problem. DAG cuts are a bit different due to the  constraint to separate a low from a high set of vertices.  

\begin{figure}[ht]
  \begin{center}
		\includegraphics[width=0.95\textwidth]{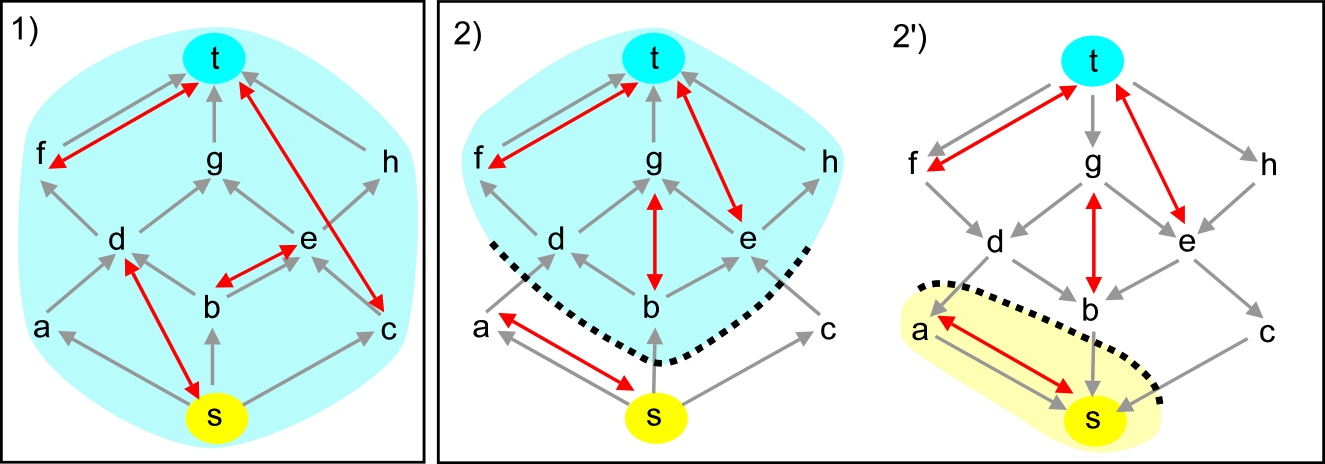}
	\end{center}
	\caption{\label{compo} \textbf{Search for a DAG cut of $(V,E)$ that does not cut an unbreakable edge}. Unbreakable non oriented edges are drawn in red and DAG edges in grey. To find a DAG cut that separates $s$ and $t$ and avoids cutting the unbreakable edges, we complete the DAG $(V,E)$ with the  set $\Pi$ of unbreakable non oriented edges. In 1) there are graph cuts separating $s$ and $t$ without cutting any unbreakable edges, but there are no DAG cuts, since the sets $P_S$ and $P_T$ of a graph cut are neither low nor high. It is impossible to separate a low part containing $s$ and a high part containing $t$ without cutting an unbreakable edge, as the connective component of $t$ in the completed graph $(V,E\cup \Pi)$ (component in light blue) contains $s$. In 2), the connective component of $t$ in the completed graph $(V,E\cup \Pi)$ does not contain $s$. It provides the upper part $P_T$, which is not separated from its complementary (lower) part by any unbreakable edges. By changing the direction of the DAG edges and keeping the unbreakable edges in 2'), the connective component of $s$ provides the lowest DAG cut.}
\end{figure}

We show now how to solve  the DAG cutting problem in a DAG $(V,E)$ by avoiding to cut a set of unbreakable edges denoted $\Pi$. 
It is assumed that the part $P_S$ containing $S$ is destined to be the low/source part and its complementary $P_T$ the high/terminal part. The search for a DAG cut separating a low part containing $S$ from a high part containing $T$ is solved by computing the connective component of $T$ in the graph $(V,E \cup \Pi)$ where the initial DAG $(V,E)$ is completed  with the set $\Pi$ of unbreakable edges (Fig.~\ref{compo}). If the connective component of $T$ contains a vertex of $S$, then there is no valid DAG cut. If the connective component of $T$ does not contain a vertex of $S$, then this set of vertices together with its complement provides the highest valid DAG cut.

We can also reverse the direction of the DAG edges and compute the connective component of $S$. If it contains $T$, there is no valid DAG cut. Otherwise, the connective component of $S$ together with its complement provides the lowest valid DAG cut (Fig.~\ref{compo}). 

Applying Theorem \ref{insecable} and the computation of a DAG cut with unbreakable edges in a DAG, we reduce the computation of a solution to the calissons puzzle to the computation of the connective component of $\front _n$ in the DAG 
$\H _n$ completed by the set $\Pi (X)$ of unbreakable edges.
If the connective component of $\front _n$ contains a cube of $\back _n$, the puzzle has no solution. Otherwise, the connective component of $\front _n$ provides a valid DAG cut, i.e. a calisson tiling satisfying the puzzle instance. 

The number of vertices in the DAG $\H_n$ is $O(n^3)$. The degree of the cubes being at most $3$, exploring the connective component of the graph requires at most $O(n^3)$ operations, which makes an algorithm of cubic complexity for solving an instance of the puzzle in $\triangle _n$. It proves Theorem \ref{maincalissons} and the algorithm used is a connective component exploration, i.e. the most elementary algorithm in the graph algorithmic arsenal.

It shows that if an instance of the calissons puzzle admits a solution, then there exists a DAG cut/stepped surface of maximum height. By reversing the roles of $\back _n$ and $\front _n$, or simply by symmetry, there also exists a minimal solution. All solutions of the puzzle instance lie between these two extreme solutions/surfaces. The algorithm computing the connective component of $\back _n$ in the reverse graph of $H_n$ completed by the unbreakable edges $\Pi(X)$ is called the \textit{advancing surface algorithm}.

\subsection{With a Paper, a Pencil and a Rubber}

We explain now how to execute the advancing surface algorithm with a sheet of paper, a pencil and an rubber. The first remark is that our perception implements more easily the additive algorithm of the advancing surface than the subtracting algorithm that we have by using $\H_n$ and starting from $\front$.

\begin{figure}[ht]
  \begin{center}
		\includegraphics[width=0.95\textwidth]{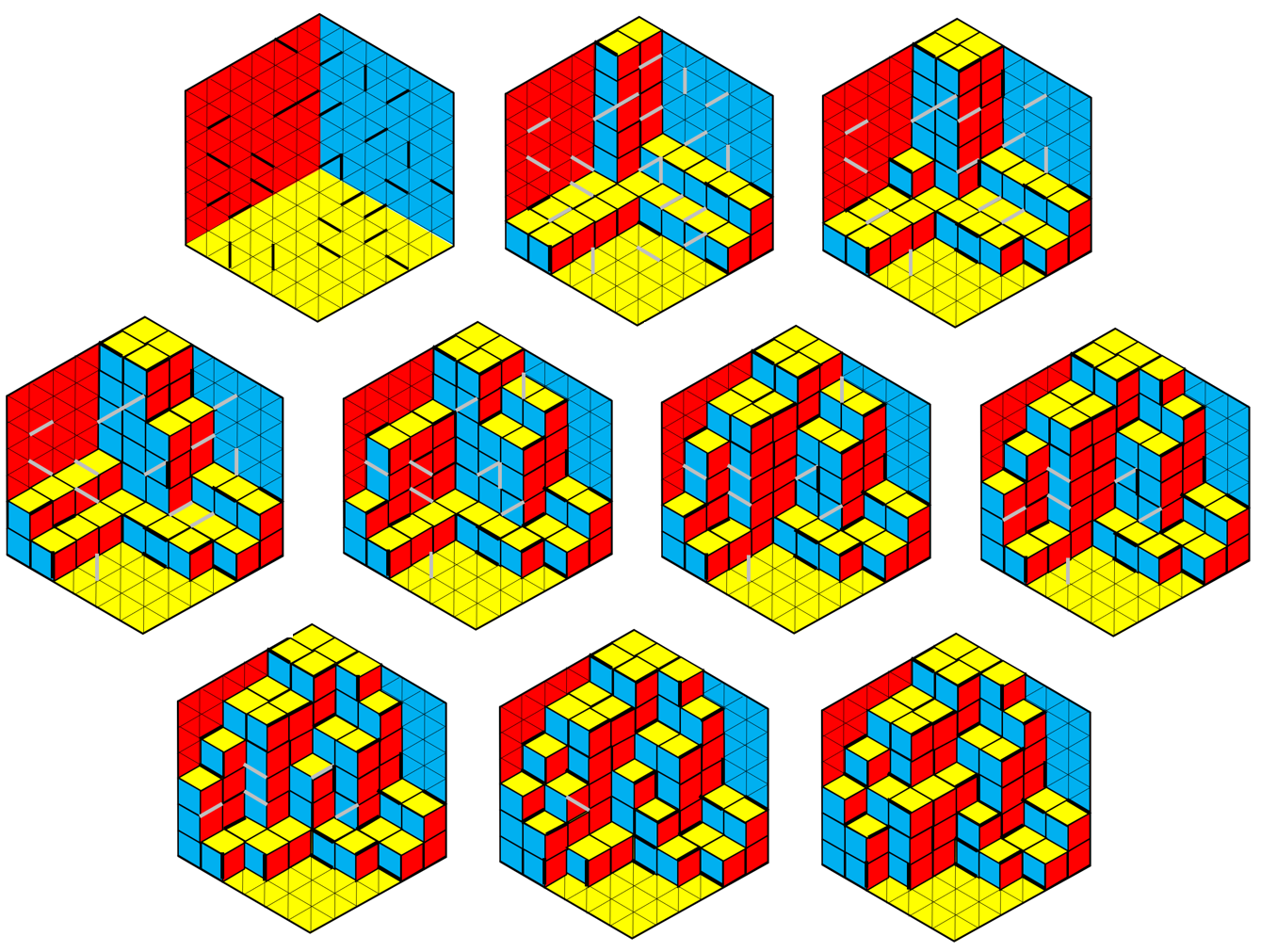}
	\end{center}
	\caption{\label{surfacequiavance} \textbf{Solving an instance of the calissons puzzle with the advancing surface algorithm.}}
\end{figure}

The strategy is illustrated in Fig.~\ref{surfacequiavance}.
A current stepped surface is initialized with the surface separating $\back _n$ from $\square _n$. The set of $\square _n$ cubes behind the surface is empty. To satisfy one of  edges $e\in X$, a cube of $\square _n$ must be added, along with all the cubes below it in the  DAG $\H_n$, i.e. backwards in the $(1,1,1)$ direction. With each addition, it must be ensured that the non-overlap and saliency constraints of the treated edge cannot be satisfied by adding a cube further back. Adding this cube may violate a previously satisfied constraint, but it is necessary. 
We therefore perform the operation of adding a cube and the cubes further back. On paper, we can even perform several operations in parallel on disjoint parts of the tiling. 
And so on until all the non-overlap and saliency constraints are satisfied, or until a cube of $\front _n$ is added, in which case the instance admits no solution.

\section{Solving the Extended Calissons Puzzle in Arbitrary Regions}\label{wwwalll}

The problem that we are now considering is more general. We want to tile a region $R\subset \triangle ^2$ with calissons. Our main assumption is that
$R$ is simply connected. The region $R$ is not necessarily bounded (we can have $R=\triangle ^2$). If it is bounded, its boundary is denoted $\partial R$ and we denote $\partial ^1 R$ its edges and $\partial ^0 R$ its vertices.  We admit the boundary to pass through the same vertices or edges several times but without imposing  the saliency constraint on the common edges. On the other hand, we exclude regions for which the set of triangles in $R$ is not connected according to edge adjacency (this convenient assumption does not reduce the generality of the framework since in this case, we can study the calissons puzzles independently in each connective component). An example of a finite region within the scope of this study is shown in Fig.\ref{duplication}. 

\begin{figure}[ht]
  \begin{center}
		\includegraphics[width=0.75\textwidth]{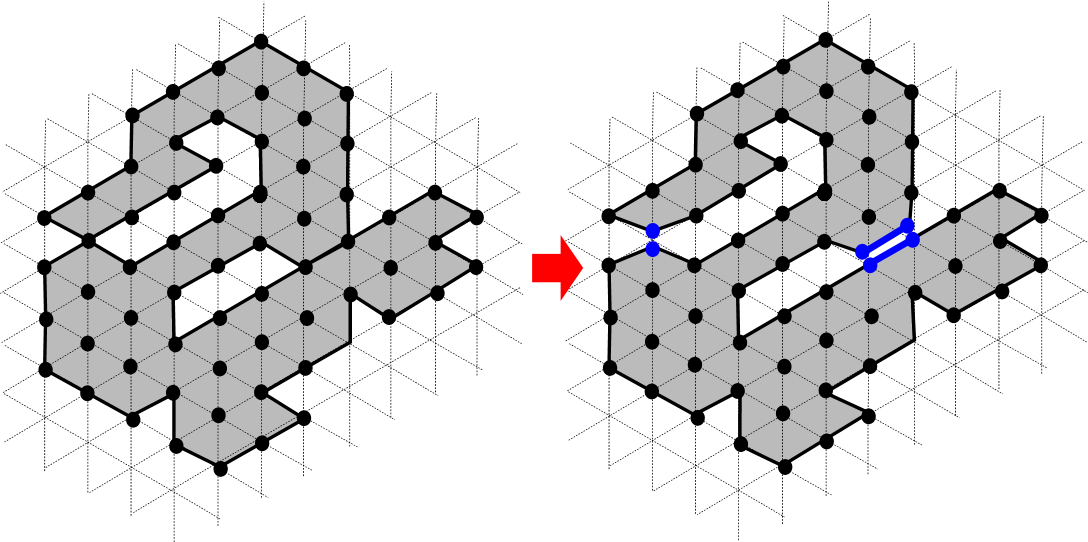}
	\end{center}
	\caption{\label{duplication} \textbf{A region $R$} in our scope. This region bounded by the path $\partial R$ is simply connected. In this case, the common vertices and edges of $\partial R$ are duplicated and are not subject to saliency constraints. The new sets of vertices and edges of $R$ are denoted $\triangle ^0 _R$ and $\triangle ^1 _R$.}
\end{figure}

To tile $R$, we impose the non-overlap condition (i) to obtain a tiling and possibly the saliency condition (ii). If we take into account the saliency conditions, the set of unbreakable edges is $\Pi(X)=\Pi _{(i)}(X) \cup \Pi _{(ii)} (X)$ while if we remove it, we have just $\Pi(X)=\Pi _{(i)}(X)$.
An instance of the extended calissons puzzle \texttt{Calissons$(X,R)$} is solved using different methods if the region $R$ is finite or not.

\subsection{The advancing surface algorithm}

To solve an instance of the calissons puzzle \texttt{Calissons$(X,R)$} with a finite and simply connected region $R$, we generalize the method of the advancing (backward of forward, as the case may be) surface presented to solve the problem in the hexagon. The main difference lies in the addition of a preliminary initialization step of the two sets $\back$ and $\front$. The algorithm is as follows:
 
\begin{enumerate}
    \item  Execute two times Thurston's algorithm to compute respectively the minimum and maximum tilings $P_{min}$ and $P_{max}$ of $R$. 
    Then we fix a pair of cubes $(x,y,z)+C, (x+1,y+1,z+1)+C$ whose projection $\varphi (x,y,z)$ is on the edge of $R$ and which we want to separate. The two tilings $P_{min}$ and $P_ {max}$ respectively define a minimal and maximal DAG cut of the set of cubes whose projection is in $R$ and separating $(x,y,z)+C, (x+1,y+1,z+1)+C$. We denote $\back _R$ the set of cubes below the minimum DAG cut and $\front _R$ the set of cubes above the maximum DAG cut.
    \item The two sets $\back _R$ and $\front _R$ now play the same role as  $\back _n$ and $\front _n$ in solving the initial puzzle \texttt{Calissons$(R,\triangle _n)$}. Let $\square ^3 _R$ be the set of cubes between $\back _R$ and $\front _R$. Finally, we define the DAG $\H _R$ induced by $\H$ on the set of vertices $\back _R \cup \square ^3 _R \cup \front _R$. The calisson tilings of the region $R$ are the $\varphi$ projections of the DAG cuts of $\H _R$ separating $\back _R$ from $\front _R$. To have a solution of an instance of \texttt{Calissons$(X,R)$}, the DAG cut must not cut any unbreakable edge. 
    So the algorithm simply computes the connective component of $\front _R$ in the graph $\H _R$ enriched with the set $\Pi$ of unbreakable edges. 
\end{enumerate}

In other words, the backward/forward surface algorithm agglomerates Thurston's algorithm to initialize $\back _R$ and $\front _R$ (computation time in $O(|\partial R |^2)$ ) with a connective component exploration in a graph of size $O(|\partial R |^3)$. The complexity of the algorithm is therefore $O(|\partial R |^3)$, which proves Theorem \ref{wall}.

\subsection{Extending Thurston's theorems and algorithm}

In the case of an instance \texttt{Calissons$(X,R)$} for an infinite region $R$, we can no more apply Thurston's algorithm or the advancing surface algorithm. The next results involve successive reductions of the instance \texttt{Calissons$(X,R)$} to three path problems in a graph.

\textbf{Notations.}
The region $R$ is bounded by $\partial R$.
Some vertices of $\partial ^0 _R$ and edges of $\partial ^1 _ R$ may appear several times (at most three) on the boundary of $R$. 
These vertices and edges are duplicated and attached to the various triangles and calissons to which they are connected (Fig.\ref{duplication}).
The part of the triangular grid covering $R$ and slightly modified by the duplications is denoted by 
$\triangle _R$ with $\triangle ^0 _R$, $\triangle _R ^1$ and $\triangle _R ^2$ as its set of vertices, edges and triangles.

We introduce the set $\overline \square ^3 _R$ of cubes $(x,y,z)+C$ whose
projection $\varphi(x,y,z)$ is a vertex of $\triangle _R ^0$. These are stacks of cubes in the  direction $(1,1,1)$. The overlined notation $\overline \square$ refers to the fact that there is no longer a stacking boundary. The heights of the cubes in a stack range from $-\infty$ to $+\infty$. As some of the vertices of $\triangle _R$ have been duplicated, so have the stacks of cubes that project onto them, and although we no longer mention it, most of the sets and relations presented in the following must take it into account. 

The set of cubes $\overline \square ^3 _R$ is completed by several sets of edges.

We start with the structural directed graph $\H_R=(\overline \square ^3 _R , \wedge _R)$ induced by the whole DAG $\H=(\square ^3 , \wedge)$ on the subset of cubes $\overline \square ^3 _R$.
This  graph denoted $\H_R=(\overline \square ^3 _R , \wedge _R)$ is a DAG. 
Note that the difference in height between the origin and destination cubes of any edge is $+1$. 
As it stands, a DAG cut of  $\H_R$ is not a calisson tiling of the region $R$, since the edges of the boundary $\partial R$ can be overlapped. 

To take into account the constraints of the calissons puzzle, we need to complete the graph $\H_R$ with the unbreakable edges that guarantee satisfaction of the constraints linked to the edge of $R$ and to $X$. 
According to the lemmas \ref{constraintz}, \ref{constraintx} and \ref{constrainty}, we have two types of unbreakable edges, non-overlapping and saliency edges, but if we also incorporate the non-overlapping edges of the edge of $R$, we have three classes of unbreakable edges:

\begin{enumerate}
    \item For an edge $e\in \partial R$, the set $\Pi_{(i)}(e)$ contains the unbreakable (two-way)  edges of non-overlapping of $e$. As rising edges are already considered in $\wedge _R$, we focus on the descending edges of $\Pi_{(i)}(\partial R)$ with vertices in $\overline \square ^3 _R$. Their set is denoted $\vee _R$. They descend by one unit.
    \item In the same way as $\vee _R$, we have the unbreakable edges 
     of $X$. As their upward direction is already taken into account in $\wedge _R$, we note 
    $\vee _X$ the set of descending edges for the non-overlapping constraints induced by $X$ in the downward direction. Their height difference is $-1$. 
    \item Finally, we have the unbreakable edges expressing the saliency constraints induced by the edges in $X$. They are two-way and have not yet been taken into account. Their height difference is $0$. Their set is denoted $\lessgtr _X$. 
\end{enumerate}

\begin{figure}[ht]
  \begin{center}
		\includegraphics[width=0.95\textwidth]{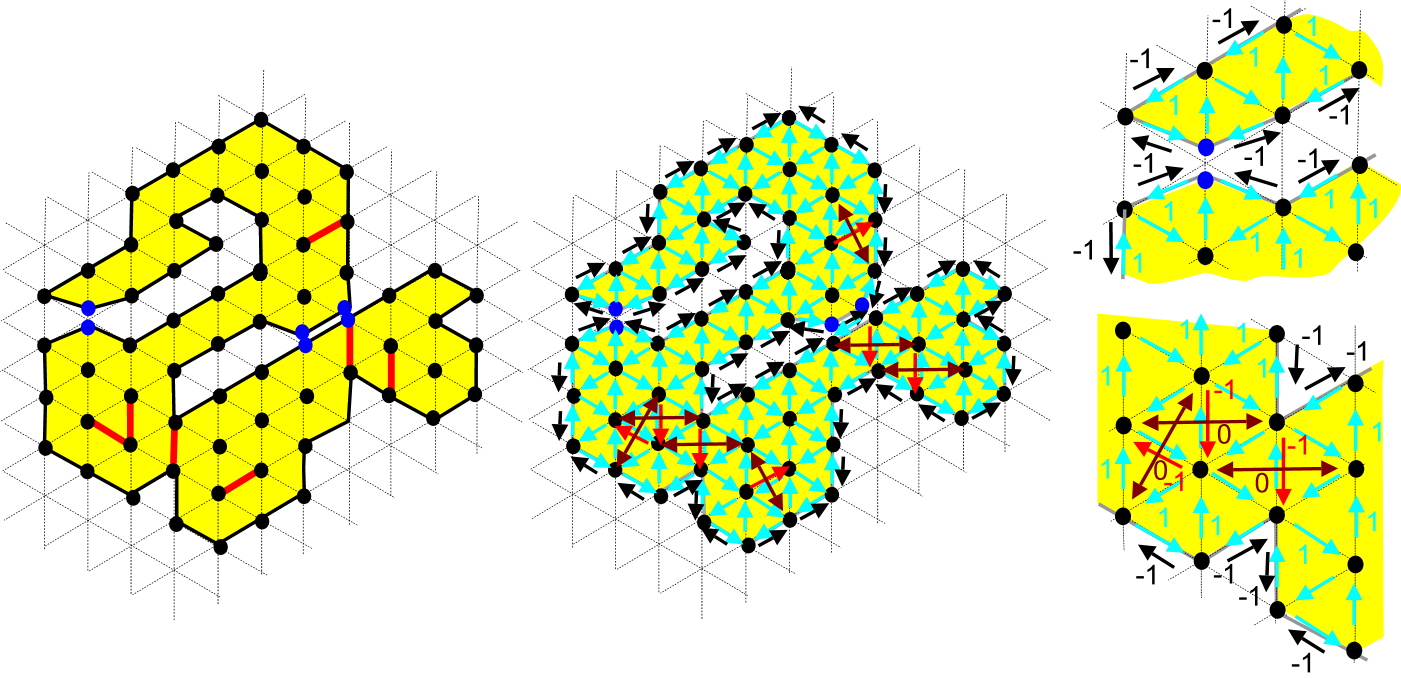}
	\end{center}
	\caption{\label{graphplan} \textbf{The projected graph $\varphi (\overline \square ^3 _R , \wedge _R \cup \vee _R \cup \vee _X \cup \lessgtr _X)$}. On the left, an instance of the calissons puzzle \texttt{Calissons$(X,R)$} with a region $R$ whose vertices and edges have been duplicated. In the center, the projected graph with  $\varphi(\wedge _R)$ edges with weight $+1$ (light blue), $\varphi(\vee _R)$ edges with weight $-1$ (black), $\varphi(\vee _X)$ edges with weight $-1$ (red) and $\varphi(\lessgtr _X)$ two-way saliency edges with weight $0$ (brown). On the right, a zoom on two zones.  }
\end{figure}

Finally, we introduce the projection
of the graph $(\overline \square ^3 _R , \wedge _R \cup \vee _R \cup \vee _X \cup \lessgtr _X)$ by $\varphi$ (see Fig.~\ref{graphplan}). By definition, the cubes of $\overline \square ^3 _R$ project onto the vertices of the region $R$, i.e. into $\triangle ^0 _R$. The edges of $\vee _R$ project onto the edges of $R$. The edges of $\vee _X$ project onto $X$. The edges of $\lessgtr _X$ project onto the diagonals of the calissons overlapping the edges of $X$ that are not in $X$. To compensate for the $2$ dimensions of this graph, each  edge $\varphi (e)$ projected from an $e$ edge is weighted by the height difference between its destination cube and its source cube. The weight of the edges in $\varphi(\wedge _R)$ is $+1$, the weight of the edges in $\varphi  (\vee _R \cup \vee _X)$ is $-1$ while the edges $\varphi  (\lessgtr _X)$ have a null weight. This projected weighted graph is denoted $\varphi (\overline \square ^3 _R , \wedge _R \cup \vee _R \cup \vee _X \cup \lessgtr _X)$ (Fig.~\ref{graphplan}).

\textbf{What do we get ?} 
The problems of tilability and of calissons puzzles in the region $R$ are expressed via the DAG $\H_R=(\overline \square ^3 _R , \wedge _R)$ and the sets of edges $\vee _R$, $\vee _X$ and $\lessgtr _X$.

A stepped surface of $R$ is then defined as a DAG cut of  $\H_R=(\overline \square ^3 _R , \wedge _R)$ which does not cut any edge of $\vee _R$.
Since the region $R$ is assumed to be simply connected, we still have a bijection between the tilings of $R$ and stepped surfaces. 

A stepped surface of $R$ solving a calissons puzzle \texttt{Calissons$(X,R)$} is a DAG cut of $\H_R=(\overline \square ^3 _R , \wedge _R)$  which does not cut any edge of $\vee _R \cup \vee _X \cup \lessgtr _X$.

For a finite  region $R$, we solve the problem by framing it by the minimal  and maximal  stepped surfaces $\back _R$ and $\front _R$. It reduces the problem to a DAG cut problem in a finite DAG and we solve it with a connective component exploration. 
For an infinite  region $R$, this is out of the question. Nevertheless, the problem can be rewritten in three different ways.

\begin{theorem}\label{monte}
The following four propositions are equivalent for a finite or non-finite, simply connected region $R$:
\begin{enumerate}
    \item The instance \texttt{Calissons$(X,R)$} admits a solution.
    \item The DAG $\H_R=(\overline \square ^3 _R , \wedge _R)$ admits a DAG cut which does not cut any edge of $\vee _R \cup \vee _X \cup \lessgtr _X$.
    \item The graph $(\overline \square ^3 _R , \wedge _R \cup \vee _R \cup \vee _X \cup \lessgtr _X)$ contains no path descending from a cube $(x, y, z) + C \in \overline \square ^3 _R$ to a cube $(x - k, y - k, z - k) + C \in \overline \square ^3 _R$ with $k > 0$.
    \item The weighted projected graph $\varphi (\overline \square ^3 _R , \wedge _R \cup \vee _R \cup \vee _X \cup \lessgtr _X)$ contains no absorbing cycle (Fig.~\ref{abs}).
\end{enumerate}
\end{theorem}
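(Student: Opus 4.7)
The plan is to establish the four-way equivalence by splitting it into three pieces: $(1)\Leftrightarrow(2)$ is a combinatorial repackaging through stepped surfaces, $(3)\Leftrightarrow(4)$ is a projection--lifting identity, and $(2)\Leftrightarrow(3)$ is the cut-versus-descending-path duality that constitutes the main content of the theorem.

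For $(1)\Leftrightarrow(2)$, I would extend to any simply connected $R$ the bijection between calisson tilings and stepped surfaces used in Section~\ref{res} for $\triangle_n$: a calisson tiling of $R$ corresponds to a DAG cut of $\H_R=(\overline\square^3_R,\wedge_R)$ that cuts no edge of $\vee_R$. The edges of $\vee_R$ are precisely those whose cutting would force a calisson to overlap an edge of $\partial R$, and simple connectivity of $R$ is exactly what makes the inverse $\varphi$-projection globally well-defined. Once this bijection is in hand, Lemmas~\ref{constraintz}, \ref{constraintx}, \ref{constrainty} apply verbatim to every $e\in X$: the non-overlap condition~(i) amounts to not cutting the edges of $\vee_X$ generated by $e$, and the saliency condition~(ii) amounts to not cutting the edges of $\lessgtr_X$ generated by $e$.

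For $(3)\Leftrightarrow(4)$, the projection $\varphi$ identifies the cubes $(x,y,z)+C$ and $(x-k,y-k,z-k)+C$ for every $k\in\Z$, while the weight of each projected edge is by definition the $(1,1,1)$-height difference of its lift ($+1$ for $\wedge_R$, $-1$ for $\vee_R\cup\vee_X$, $0$ for $\lessgtr_X$). Consequently a closed walk in the projected graph lifts to a 3D path whose endpoints lie in a common stack and whose stack-separation equals the total weight of the walk divided by~$3$. A descending 3D path from $(x,y,z)+C$ to $(x-k,y-k,z-k)+C$ projects to an absorbing cycle of weight $-3k<0$; conversely, any absorbing cycle can be iterated at most three times to produce a closed walk of weight a strictly negative multiple of~$3$, which lifts to a descending 3D path in a common stack.

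For $(2)\Leftrightarrow(3)$, the main obstacle, I would view a DAG cut of $\H_R$ as a height function $h:\triangle^0_R\to\Z$ prescribing, stack by stack, the cutoff between the lower and upper parts, and observe that the condition ``cuts no edge of $\vee_R\cup\vee_X\cup\lessgtr_X$'' translates to the system of inequalities $h(q)-h(p)\le w(p,q)$ over every directed edge of the weighted projected graph (with both orientations present for each unbreakable edge, thus forcing equality there). For $(2)\Rightarrow(3)$, telescoping these inequalities along any closed walk yields $0\le\sum w$, precluding absorbing cycles and, by the projection identity, 3D descending paths. For $(3)\Rightarrow(2)$, the absence of an absorbing cycle is precisely the termination condition of Bellman--Ford, which produces shortest-path distances satisfying the required triangle inequality and hence a valid $h$ defining the desired DAG cut. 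The genuine subtlety is the infinite-region case, in which the projected graph need not be finite: since $X$ and $\partial R$ are assumed locally finite, the ``unbreakable'' perturbation is confined to a bounded zone, so outside of it the canonical height of the regular triangular grid already satisfies the relevant inequalities, and a globally consistent $h$ can be obtained by running Bellman--Ford on a finite exhaustion and patching with this canonical height.
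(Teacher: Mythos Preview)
Your proposal is correct and close in spirit to the paper's proof, but the organisation and one key step differ. The paper argues cyclically $(1)\Rightarrow(2)\Rightarrow(3)\Rightarrow(4)\Rightarrow(1)$: for $(2)\Rightarrow(3)$ it exploits the $(1,1,1)$-translation invariance of $(\overline\square^3_R,\wedge_R\cup\vee_R\cup\vee_X\cup\lessgtr_X)$ to turn a single descending path into an infinite descending chain, which forces the connected component of any cube above a putative cut to swallow the whole stack; for $(4)\Rightarrow(1)$ it runs the generalized Thurston/Bellman--Ford construction directly, producing a tiling from shortest-path heights. You instead prove three bi-implications and recast a DAG cut as a height function $h:\triangle^0_R\to\Z$ subject to the difference constraints $h(q)-h(p)\le w(p,q)$, reducing $(2)\Leftrightarrow(3)$ to the classical ``potential exists $\Leftrightarrow$ no negative cycle'' equivalence. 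This makes the link to Bellman--Ford more transparent and is arguably more modular; the paper's translation-invariance step is more geometric but specific to this setting.

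Two minor points. First, in your $(3)\Leftrightarrow(4)$ the ``iterate at most three times'' precaution is harmless but unnecessary: the lift of \emph{any} closed walk already lands in the same stack, so its total weight is automatically a multiple of~$3$. Second, your handling of the infinite case via ``patching with a canonical height on a finite exhaustion'' is the weakest part of the write-up. The paper is also informal here, but a cleaner argument in your framework is simply: when the non-positive-weight edges (those coming from $\vee_R\cup\vee_X\cup\lessgtr_X$) are finite in number, every simple path has weight bounded below, and the absence of absorbing cycles lets one restrict to simple paths; hence shortest-path distances from any source are finite and furnish the desired~$h$ directly, without patching.
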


\begin{figure}[ht]
  \begin{center}		\includegraphics[width=0.45\textwidth]{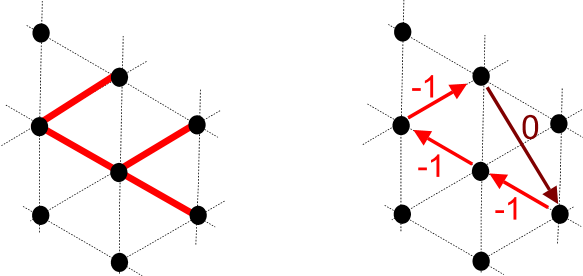}
	\end{center}
	\caption{\label{abs} \textbf{The instance \texttt{Calissons$(X,R)$} has no solution} if and only if the projected graph $\varphi(\overline \square ^3 _R , \wedge _R \cup \vee _R \cup \vee _X \cup \lessgtr _X)$ has an absorbing cycle. The above instance has no solution because of the saliency condition and we find an absorbing cycle passing through the edges of $\varphi(\lessgtr _X)$.}
\end{figure}

Thurston's results relate to the case where $X$ is empty. At the time, it was only a question of tilability.
The characterization of  surfaces which are tilable by calissons given in Theorem \ref{Thurston} is a corollary of the equivalence between propositions (1) and (4) of Theorem \ref{monte} in the case where $X$ is empty. 

Thurston's algorithm can also be generalized to a region $R$ with a non-empty edge set $X$.
We first explain why a distance computation algorithm in the projected graph $\varphi (\overline \square ^3 _R , \wedge _R \cup \vee _R \cup \vee _X \cup \lessgtr _X)$ allows us to solve the calissons puzzle and then show that Thurston's algorithm is a Dijkstra-like algorithm computing those distances when $X$ is empty.

The distance computation algorithm in the weighted projected graph $\varphi (\overline \square ^3 _R , \wedge _R \cup \vee _R \cup \vee _X \cup \lessgtr _X)$ starts by choosing any source vertex $s=\varphi(x_0,y_0,z_0)$ in $\triangle _R ^0$. We assume that the graph does not contain any absorbing cycles. Then the algorithm computes the distances $d(s,\varphi(x,y,z))$ from $s$ to any vertex of the $\triangle _R ^0$. As the edges weights correspond to the height differences between the cubes of $\overline \square _R ^3$, each distance $d(s,\varphi(x,y,z))$ is the height difference $h(x,y,z)-h(x_0,y_0,z_0)$ where $h(x_0,y_0,z_0)$ is the height of a fixed source cube $C_0$ above the source vertex and where  $h(x,y,z)$ is the height of the lowest cube of the stack above  $\varphi(x,y,z)$ belonging to the connective component of the source cube $C_0$ in $(\overline \square ^3 _R , \wedge _R \cup \vee _R \cup \vee _X \cup \lessgtr _X)$. In other words, the distances are the heights of a lowest layer of a connective component of the graph $(\overline \square ^3 _R , \wedge _R \cup \vee _R \cup \vee _X \cup \lessgtr _X)$. It provides a DAG cut or stepped surface solution of the instance \texttt{Calissons$(X,R)$}.

We now show that the heights computed by Thurston's algorithm in the case where $X$ is empty and by fixing the height of $s\in \partial R ^0$ at $0$ are exactly the distances
$d(s,\varphi(x,y,z))$. Thurston's algorithm starts by computing the heights of the boundary vertices by considering only the boundary edges. The computed heights might be larger than the distances $d(s,\varphi(x,y,z))$ since only the boundary edges are used for its computation, but if the interior edges provides a shortcut, there is an absorbing cycle and it is the case without solution. If there is no absorbing cycle, the heights computed along the boundary are the exact distances  $d(s,\varphi(x,y,z))$. The decimation routine of Thurston's algorithm is identical to Dijkstra's algorithm for computing the distances $d(s,\varphi(x,y,z))$. It considers the vertex $v$ of smallest computed distance to the source, updates the distances from the source to the neighbors of $v$ and never goes back to $v$. The guarantee that we do not have to revisit $v$ does not hold with negative weights which makes Dijkstra and Thurston's algorithm inefficient in this case. Then if we want to generalize Thurston's algorithm with non empty sets $X$, the extended algorithm has to deal with edges of negative weights. It requires to use Bellman-Ford's algorithm instead of Dijkstra's strategy. As conclusion, Thurston's algorithm can be generalized by the computation of the distances from a chosen source in the weighted projected graph $\varphi (\overline \square ^3 _R , \wedge _R \cup \vee _R \cup \vee _X \cup \lessgtr _X)$ with Bellman-Ford's algorithm \cite{Bellman}. Either the algorithm finds an absorbing cycle and there is no solution, or it provides the distances of each vertex and it remains to connect by segments the adjacent vertices whose distances to $s$ differ by $1$. The generalized Thurston's algorithm is illustrated Fig.~\ref{end}.

In the case of a finite region $R$, the time complexity of the distances computation by Bellman-Ford algorithm is  $O(|V||E|)$ namely $O(|\partial R|^4)$ because we have $O(|\partial R|^2)$ vertices and $O(|\partial R|^2)$ edges. It follows that this generalized version of Thurston's algorithm does not improve the cubic complexity  of the surface advancing algorithm going from $\back _R$ to $\front _R$.

\begin{figure}[ht]
  \begin{center}		\includegraphics[width=0.85\textwidth]{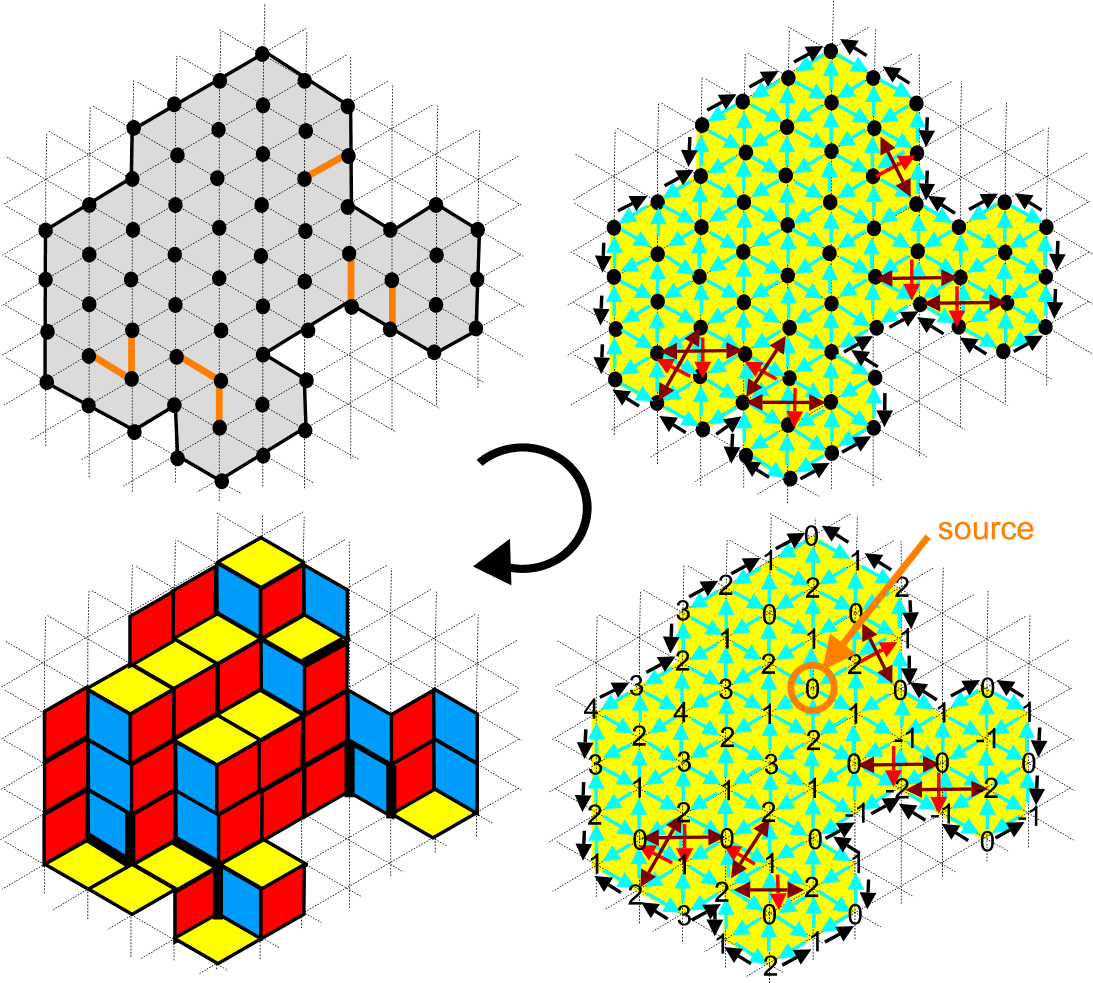}
	\end{center}
	\caption{\label{end} \textbf{The resolution of an instance \texttt{Calissons$(X,R)$}} by computing the distances from any source (in orange) in the projected graph $\varphi (\overline \square ^3 _R , \wedge _R \cup \vee _R \cup \vee _X \cup \lessgtr _X)$ (the weights of the blue, red and brown edges are respectively $+1$, $-1$ and $0$). }
\end{figure}


\subsection{Proof of Theorem \ref{wallinfini}}

The most useful proposition of Theorem \ref{monte} for solving an instance of \texttt{Calissons$(X,R)$} with an infinite region $R$ is proposition (4), but the graph $\varphi (\overline \square ^3 _R , \wedge _R \cup \vee _R \cup \vee _X \cup \lessgtr _X)$ still has an infinite number of vertices. The final step is to reduce it. To this end, we distinguish two classes of vertices.
We denote $X^0$ the vertices of the edges of $X$ and $\partial R ^0$ the vertices of the edges of $R$. 

\begin{itemize}
\item The \textit{regular} vertices  of the projected graph $\varphi (\overline \square ^3 _R , \wedge _R \cup \vee _R \cup \vee _X \cup \lessgtr _X)$ are the vertices of $\triangle ^0 _R$ adjacent only to edges of weight $+1$.
    \item The \textit{critical} vertices are the vertices of $\triangle ^0 _R$ adjacent to at least one edge of weight $0$ or $-1$. The set of critical vertices is $\partial R ^0 \cup X^0$. If $X$ is finite, there is a finite number of critical vertices.
\end{itemize}  

We reduce the graph $\varphi (\overline \square ^3 _R , \wedge _R \cup \vee _R \cup \vee _X \cup \lessgtr _X)$ to a graph denoted $\Gamma (R,X)$ with vertex set $\partial R ^0 \cup X^0$. In other words, all regular vertices are removed from the projected graph. This pruning is accompanied by the addition of new edges to make the directed graph $\Gamma (R,X)$ complete. 
Deleting regular vertices destroys many paths linking critical vertices but consisting of edges of weight $+1$. This is why we complete the edges of the graph $\Gamma (R,X)$. If there are no edges of weight $0$ or $-1$ going from $a$ to $b$, we add one of weight equal to the distance from $a$ to $b$ in the subgraph of the ascendant edges i.e. $\varphi (\overline \square ^3 _R , \wedge _R)$. These new edges compensate for the deleted vertices.
We then have the following equivalence.

\begin{lemma}\label{reducgraphe}
The graph $\varphi (\overline \square ^3 _R , \wedge _R \cup \vee _R \cup \vee _X \cup \lessgtr _X)$ contains an absorbing cycle if and only if the reduced graph $\Gamma (R,X)$ contains an absorbing cycle.
\end{lemma}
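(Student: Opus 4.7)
The plan is to exploit the structural dichotomy that drives the whole reduction: an edge of weight $0$ or $-1$ cannot be incident to a regular vertex, because a regular vertex is by definition adjacent only to edges of weight $+1$. Consequently every non-positive edge of $\varphi (\overline \square ^3 _R , \wedge _R \cup \vee _R \cup \vee _X \cup \lessgtr _X)$ has both endpoints critical, and any path that passes through a regular vertex must do so along $+1$ edges only. This observation makes the passage between the two graphs almost mechanical in both directions.

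For the ``$\Leftarrow$'' direction, I start from an absorbing cycle $v_1 \to v_2 \to \cdots \to v_k \to v_1$ in $\Gamma(R,X)$ and expand it into a closed walk in the original graph. Each edge of $\Gamma(R,X)$ is, by construction of the reduction, either an original edge of weight $0$ or $-1$ (which I keep as is) or a shortcut whose weight equals the distance from its source to its target in the ascending subgraph $\varphi(\overline\square ^3 _R , \wedge _R)$; in that second case I replace the shortcut by an actual realizing path of $+1$ edges, whose total weight equals the shortcut weight. The resulting closed walk in $\varphi (\overline \square ^3 _R , \wedge _R \cup \vee _R \cup \vee _X \cup \lessgtr _X)$ has exactly the same total weight, hence is strictly negative. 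Decomposing any closed walk into simple cycles, the sum of their weights equals the total weight of the walk, so at least one of these cycles must be absorbing.

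For the ``$\Rightarrow$'' direction, I take an absorbing cycle $C$ in $\varphi (\overline \square ^3 _R , \wedge _R \cup \vee _R \cup \vee _X \cup \lessgtr _X)$. First, $C$ must contain at least one critical vertex, for otherwise every edge of $C$ is $+1$ and its total weight is positive. Listing the critical vertices appearing on $C$ in cyclic order as $v_1,\dots,v_k,v_1$, I split $C$ into the subpaths $C_i$ joining $v_i$ to $v_{i+1}$. By the dichotomy above, each $C_i$ is either (a) a single edge of arbitrary weight between two critical vertices, or (b) a path made entirely of $+1$ edges through regular intermediate vertices, whose total weight is therefore equal to its length. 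In case (a) with weight $0$ or $-1$, the edge itself sits in $\Gamma(R,X)$. In case (a) with weight $+1$, or in case (b), the pair $(v_i,v_{i+1})$ admits either a preserved non-positive edge of $\Gamma(R,X)$, whose weight is $\leq 0$ and hence no larger than the weight of $C_i$, or else a shortcut edge whose weight equals the distance from $v_i$ to $v_{i+1}$ in the ascending subgraph, which is $\leq$ the length of $C_i$. In every case I obtain an edge of $\Gamma(R,X)$ of weight at most the weight of $C_i$; concatenating these gives a closed walk in $\Gamma(R,X)$ of total weight at most the weight of $C$, hence strictly negative, from which an absorbing simple cycle is extracted as before.

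The main obstacle I anticipate is bookkeeping rather than conceptual: making sure the reduction rule is applied coherently when several competing edges exist between two critical vertices (a preserved non-positive edge always wins the comparison against any ascending shortcut, since the shortcut weight is nonnegative), and verifying that shortcut edges really can be realized by genuine ascending paths in $\varphi(\overline\square ^3 _R , \wedge _R)$ inside the simply connected region $R$ so that expansion in the ``$\Leftarrow$'' step is well defined. Both points follow directly from the definition of the reduction and from simple connectedness of $R$, so no new machinery is required.
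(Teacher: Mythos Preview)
Your proof is correct and follows essentially the same approach as the paper's own (very terse) argument: contract the absorbing cycle of the full graph onto its critical vertices using the shortcut edges, and conversely expand each shortcut of $\Gamma(R,X)$ into a realizing ascending path. Your version is in fact more careful than the paper's, which does not explicitly check the weight inequality on each segment nor invoke the closed-walk-to-simple-cycle decomposition; you spell both of these out, and your handling of the case where two critical vertices are joined by a $+1$ edge (comparing against either a preserved non-positive edge or the ascending-distance shortcut) is exactly the bookkeeping the paper elides.
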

\begin{proof}
    If the graph $\varphi (\overline \square ^3 _R , \wedge _R \cup \vee _R \cup \vee _X \cup \lessgtr _X)$ contains an absorbing cycle, the cycle necessarily contains a critical vertex $a$. We can reconstruct the absorbing cycle of $\varphi (\overline \square ^3 _R , \wedge _R \cup \vee _R \cup \vee _X \cup \lessgtr _X)$ in $\Gamma (R,X)$ by following the critical vertices of the path and using the shortcuts of the new weighted edges when the path passes through regular vertices.

    Conversely, an absorbing cycle in the reduced graph $\Gamma (R,X)$ provides an absorbing cycle in the graph $\varphi (\overline \square ^3 _R , \wedge _R \cup \vee _R \cup \vee _X \cup \lessgtr _X)$ by following the shortest paths in $\Gamma (R,X)$ from a critical vertex to a critical vertex.
\end{proof}

The lemma \ref{reducgraphe}  makes instances of the puzle instances \texttt{Calissons$(X,R)$} decidable for certain unbounded  regions. 
The key point is the computation of the graph $\Gamma (R,X)$ which requires the computation of distances in $\varphi (\overline \square ^3 _R , \wedge _R)$. 

If we choose the region $R$ consisting of the entire triangular grid $\triangle$, the distances in $\varphi (\overline \square ^3 _R , \wedge _R)$ are computed in constant time.
The distance from $\varphi(x,y,z)$ to $\varphi(x',y',z')$ in $\varphi (\overline \square ^3 _R , \wedge _R)$ is equal to 
$(x'-x)+(y'-y)+(z'-z)-3 \mathrm{min}\{ (x'-x), (y'-y), (z'-z) \}$.


\begin{figure}[ht]
  \begin{center}
		\includegraphics[width=0.75\textwidth]{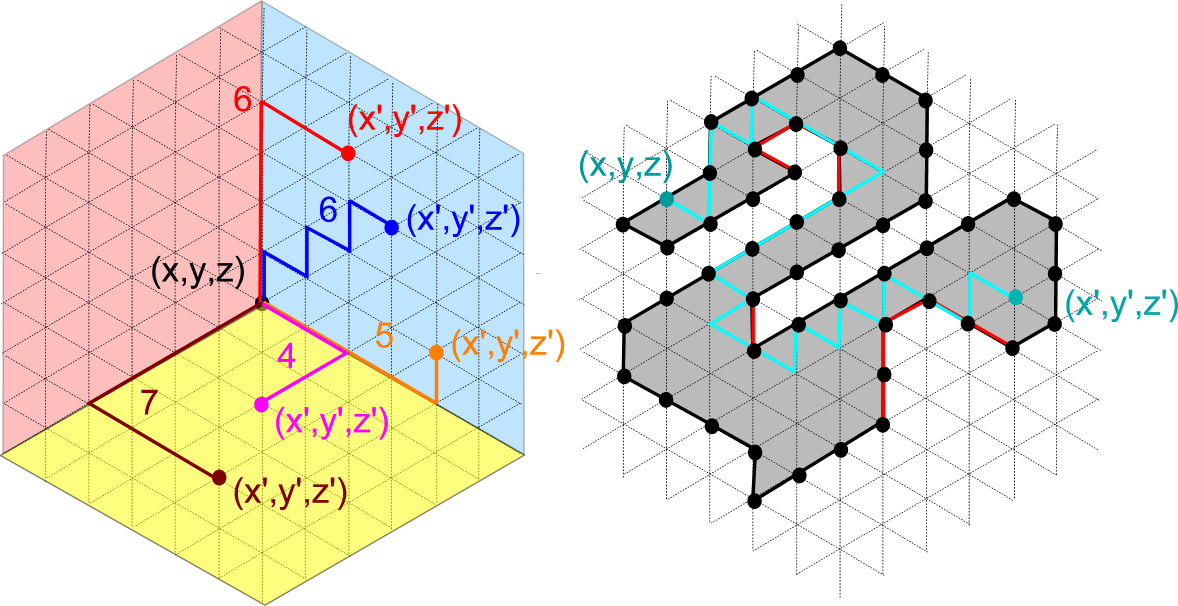}
	\end{center}
	\caption{\label{distances} \textbf{For the region $R=\triangle$, the distances in the graph $\varphi (\overline \square ^3 _R , \wedge _R)$} are computed in constant time by a formula (left). For convoluted regions, this can become more complicated}.
\end{figure}

If the region $R$ is the entire  grid $\triangle$, the vertices of the graph $\Gamma (\triangle,X)$ are the vertices of $X^0$. Their number is $O(|X|)$. The graph has $O(|X|)$ vertices and $O(|X|^2)$ edges whose weights are computed in constant time. Creating the graph $\Gamma (\triangle,X)$ takes $O(|X|^2)$ operations. 
Then, the search for an absorbing cycle in $\Gamma (\triangle,X)$ can be solved by the Bellman-Ford algorithm from any vertex of the graph \cite{Ford56,Bellman}. Its complexity is the product of the number of edges and vertices. We can therefore determine the existence of an absorbing cycle in $\Gamma (\triangle,X)$ in $O(|X|^3)$ operations. Combining Lemma \ref{reducgraphe} with proposition (4) of Theorem \ref{monte}, the absence of an absorbing cycle in $\Gamma (\triangle,X)$ is equivalent to the existence of a solution to the instance \texttt{Calissons$(X,R)$}. This proves Theorem \ref{wallinfini}.


\subsection{Proof of Theorem \ref{monte}.} 

This proof can be written with different levels of detail.

\begin{proof}
We assume (1) and show (2). 
A set of heights can be defined by tilings. First, we identify a vertex $\varphi (x,y,z)$ of the tiling at height $0$. Then, by following the edges of the tiling, we can compute the heights of all the vertices in the tiling. The fact that the region $R$ has no holes guarantees the consistency of the heights (whatever the path taken to go from $\varphi (x,y,z)$ to $\varphi (x',y',z')$, the height obtained is identical, as one path can be deformed into another without changing the initial and final heights). Each vertex $\varphi (x',y',z')$ is then associated with the cube $(x'+y'+z')+C$ whose height $x'+y'+z'$ is the height computed with the tiling. We thus obtain a set $L$ of cubes such that $\varphi (B)=\triangle _R$. In $\H_R$, consider the DAG cut that separates the cubes strictly above $L$ from the cubes at $L$ and below. 
We have to prove now that this DAG cut does not cut an unbreakable edge in $\vee _R \cup \vee _X \cup \lessgtr _X$. 

Consider an edge $e$ of $\vee _R \cup \vee _X$ connecting $(x,y,z)+C$ to $(x-1,y,z)+C$. For a solution of the instance \texttt{Calissons$(X,R)$}, the edge $\varphi(e)$ is a tiling edge (not overlapped by a calisson). If the height 
 computed from the tiling of the highest cube $(x,y,z)+C$ of projection $\varphi (x,y,z)$ is denoted $h(x,y,z)$, then the height $h(x-1,y,z)$ of the highest cube $(x,y,z)+C$ of projection $\varphi (x-1,y,z)$ is $h(x-1,y,z)=h(x,y,z)-1$. This shows that if the origin $(x,y,z)+C$  of edge $e$ is under the DAG cut, then so is the end cube $(x-1,y,z)+C$ of the edge $e$.

Consider an edge $e$ of $\lessgtr _X$ connecting $(x,y,z)+C$ to $(x-1,y+1,z)+C$. For a solution of the instance \texttt{Calissons$(X,R)$}, the tiling has two calissons of different colors adjacent to $\varphi(e)$, making two calisson edges from $\varphi (x,y,z)$ to $\varphi (x-1,y+1,z)$ preserving the height. If the height 
 computed from the tiling of the highest cube $(x,y,z)+C$ of projection $\varphi (x,y,z)$ is denoted $h(x,y,z)$, then the height $h(x-1,y+1,z)$ of the highest cube $(x,y,z)+C$ of projection $\varphi (x-1,y+1,z)$ is $h(x-1,y+1,z)=h(x,y,z)$. This shows that if $(x,y,z)+C$ origin of edge $e$ is under the DAG cut, then so is the end cube $(x-1,y+1,z)+C$ of the edge $e$.

We now prove that (2) implies (3) by establishing that (2) and not (3) lead to a contradiction.
 The proof is based on the idea that if we have a cube $(x,y,z)+C$ above the cut, then a path from $(x,y,z)+C$ in the graph $(\overline \square ^3 _R , \wedge _R \cup \vee _R \cup \vee _X \cup \lessgtr _X)$ cannot be cut because it is made up of unbreakable edges and edges $a \rightarrow b$ edges with $a<b$ ($a$ cannot be above the DAG cut without $b$ being there too). In other words, if $(x,y,z)+C$ is above the DAG cut, all vertices related to it in $(\overline \square ^3 _R , \wedge _R \cup \vee _R \cup \vee _X \cup \lessgtr _X)$ are also above the DAG cut. The assumption not (3) means that there is a descending path 
 in $(\overline \square ^3 _R , \wedge _R \cup \vee _R \cup \vee _X \cup \lessgtr _X)$. Since the graph $(\overline \square ^3 _R , \wedge _R \cup \vee _R \cup \vee _X \cup \lessgtr _X)$ is invariant by translation of vector $(1,1,1)$, there is a path traversing $(\overline \square ^3 _R , \wedge _R \cup \vee _R \cup \vee _X \cup \lessgtr _X)$ from height $+\infty$ to $-\infty$. It implies that the connective component of any cube in $(\overline \square ^3 _R , \wedge _R \cup \vee _R \cup \vee _X \cup \lessgtr _X)$ entirely contains $\square ^0 _R$, which contradicts the fact that we have a DAG cut and leads to a contradiction.
 
We now prove that (3) implies (4). The proof simply consists in noticing
that the graph 
$(\overline \square ^3 _R , \wedge _R \cup \vee _R \cup \vee _X \cup \lessgtr _X)$ has a descending path if and only if the projected graph 
$\varphi(\overline \square ^3 _R , \wedge _R \cup \vee _R \cup \vee _X \cup \lessgtr _X))$ in which height differences are represented by weights, contains an absorbing cycle.

Finally, we show that (4) implies (1) by describing the computation of a tiling from the graph $\varphi(\overline \square ^3 _R , \wedge _R \cup \vee _R \cup \vee _X \cup \lessgtr _X))$. The process is the generalized Thurston algorithm illustrated in Fig.~\ref{end}. We choose a source vertex $\varphi(x,y,z)\in \triangle^0 _R$ and set its height to $0$. We compute the distances to this vertex in the weighted projected graph $\varphi(\overline \square ^3 _R , \wedge _R \cup \vee _R \cup \vee _X \cup \lessgtr _X)$. Adjacent vertices in the triangular grid $\triangle_R$ whose distance/height differs from $1$ are connected by an edge and those whose distance/height differs from $0$ or $2$ are not connected. 
The weights of the $\wedge _R \cup \vee _R \cup \vee _X \cup \lessgtr _X$ edges guarantee that the tiling respects the non-overlap and saliency constraints of the \texttt{Calissons$(X,R)$} instance.
\end{proof}

\subsection{Conclusion and Open Questions}

We have provided a general solution to the calissons puzzle problem (with or without saliency constraint) for a  region without holes.
This work revisits and extends the legacy of William Thurston with a computational tone. We have used the notions of DAG cuts and the associated algorithmic through two elementary graph algorithms, the exploration of a connective component and the calculation of distances with Bellman-Ford algorithm.
However, it remains at least two open questions:
\begin{itemize}
    \item For a region $R$ with (non tilable) holes,  the calisson tilings are  no more DAG cuts. They are closer from covering spaces of the region $R$ in $\H_R$. In this more complex setting, is the calissons puzzle still solvable in polynomial time?
    \item Can the calissons puzzle and the results that we have established be extended to domino tilings in a square grid (a framework in which the notion of height can also be used)?
\end{itemize}

\bibliography{lipics-v2021-sample-article}

\end{document}